\documentclass[10pt,reqno,oneside]{amsart}

\usepackage{cancel,bm, amssymb}
\usepackage{color}

\usepackage{enumitem}

\usepackage{color}
\usepackage{graphicx,framed}
\usepackage[colorlinks=true, pdfstartview=FitV, linkcolor=blue, citecolor=blue, urlcolor=blue]{hyperref}
\definecolor{shadecolor}{rgb}{0.95, 0.95, 0.86}

\renewcommand{\d}{{\mathrm d}}
\newcommand{\im}{{\mathrm i}}
\newcommand{\e}{{\mathrm e}}
\newcommand{\z}{\zeta}

\headheight=6.15pt
\textheight=8.75in
\textwidth=6.5in
\oddsidemargin=0in
\evensidemargin=0in
\topmargin=0in

\numberwithin{equation}{section}

\newtheorem{theo}{Theorem}[section]

\newtheorem{lem}[theo]{Lemma}
\newtheorem{rem}[theo]{Remark}
\newtheorem{problem}[theo]{Riemann-Hilbert Problem}

\newtheorem{prop}[theo]{Proposition} 
\newtheorem{cor}[theo]{Corollary}

\begin{document}

\title[Asymptotic behavior of a log gas in the bulk scaling limit II.]{On the asymptotic behavior of a log gas in the bulk scaling limit in the presence of a varying external potential II.}

\author{Thomas Bothner}
\address{Department of Mathematics, University of Michigan, 2074 East Hall, 530 Church Street, Ann Arbor, MI 48109-1043, United States}
\email{bothner@umich.edu}

\author{Percy Deift}
\address{Courant Institute of Mathematical Sciences, 251 Mercer St., New York, NY 10012, U.S.A.}
\email{deift@cims.nyu.edu}

\author{Alexander Its}
\address{Department of Mathematical Sciences,
Indiana University-Purdue University Indianapolis,
402 N. Blackford St., Indianapolis, IN 46202, U.S.A.}
\email{itsa@math.iupui.edu}

\author{Igor Krasovsky}
\address{Department of Mathematics, Imperial College, London SW7 2AZ, United Kingdom}
\email{i.krasovsky@imperial.ac.uk}

\keywords{Sine kernel determinant, Toeplitz determinant, transition asymptotics, Riemann-Hilbert problem, Deift-Zhou nonlinear steepest descent method.}

\subjclass[2010]{Primary 82B23; Secondary 47B35, 34E05, 34M50.}

\thanks{P.Deift acknowledges support of NSF Grant DMS-1300965. A. Its acknowledges support of NSF Grants DMS-1001777 and DMS-1361856, and of SPbGU grant N 11.38.215.2014. I.Krasovsky acknowledges support of the European Community Seventh Framework grant ``Random and Integrable Models in Mathematical Physics".}

\begin{abstract}
In this paper we continue our analysis \cite{BDIK} of the determinant $\det(I-\gamma K_s),\gamma\in(0,1)$ where $K_s$ is the trace class operator acting in $L^2(-1,1)$  with kernel $K_s(\lambda,\mu)=\frac{\sin s(\lambda-\mu)}{\pi(\lambda-\mu)}$. In \cite{BDIK} various key asymptotic results were stated and utilized, but without proof: Here we provide the proofs (see Theorem \ref{theo:1} and Proposition \ref{prop:int} below).
\end{abstract}

\date{\today}
\dedicatory{Dedicated to Albrecht B\"ottcher on the occasion of his $60$th birthday.}
\maketitle

\section{Introduction and statement of results}\label{sec0}
In this paper we consider the asymptotic behavior of the determinant $\det(I-\gamma K_s)$ as $s\rightarrow+\infty$ and $\gamma\in[0,1]$. This problem has received considerable attention over the past $40$ years (see, e.g., \cite{BDIK} for more discussion) and, in particular, the behavior for fixed
\begin{equation*}
	v=-\frac{1}{2}\ln(1-\gamma)\in[0,+\infty]
\end{equation*}
is known: when $v=+\infty$, i.e. $\gamma=1$, we have from \cite{Dy0,K,E,DIKZ}, as $s\rightarrow+\infty$,
\begin{equation}\label{e:1}
	\ln\det(I-K_s)=-\frac{s^2}{2}-\frac{1}{4}\ln s+\ln c_0+\mathcal{O}\left(s^{-1}\right);\ \ \ \ \ \ \ \ln c_0=\frac{1}{12}\ln 2+3\z'(-1),
\end{equation}
which is the classical expansion for the gap probability in the bulk scaling limit for the eigenvalues of Hermitian matrices chosen from the Gaussian Unitary ensemble (GUE). On the other hand, when $v\in[0,+\infty)$ is fixed, we have from \cite{BW,BB}, as $s\rightarrow+\infty$,
\begin{equation}\label{e:2}
	\ln\det(I-\gamma K_s)=-\frac{4v}{\pi}s+\frac{2v^2}{\pi^2}\ln(4s)+2\ln G\left(1+\frac{\im v}{\pi}\right)G\left(1-\frac{\im v}{\pi}\right)+\mathcal{O}\left(s^{-1}\right)
\end{equation}
in terms of Barnes $G$-function, cf. \cite{NIST}. For $\gamma<1$, $-\ln\det(I-\gamma K_s)$ gives the free energy for the so-called log-gas of eigenvalues of GUE Hermitian matrices in the above bulk scaling limit, but now in the presence of an external field $\mathcal{V}(x)=\ln(1-\gamma)$ for $-\frac{2s}{\pi}<x<\frac{2s}{\pi}$, and zero otherwise (see \cite{BDIK}). The mechanism behind the transformation of the exponential decay in \eqref{e:2} to \eqref{e:1} as $v\rightarrow+\infty$ was first analyzed by Dyson in \cite{Dy1}, although on a heuristical level. Nevertheless Dyson's computations served as a guideline in the Riemann-Hilbert based approach chosen in \cite{BDIK} which led to the following result: let $a=a(\kappa)\in(0,1)$ be the unique solution of the equation
\begin{equation}\label{e:0}
	(0,1)\ni\kappa\equiv\frac{v}{s}=\int_a^1\sqrt{\frac{\mu^2-a^2}{1-\mu^2}}\,\d\mu,
\end{equation}
and set
\begin{equation}\label{e:-1}
	V(\kappa)=-\frac{2}{\pi}\Big(E(a)-\big(1-a^2\big)K(a)\Big);\ \ \  \tau(\kappa)=2\im\frac{K(a)}{K(a')};\ \ \ \theta_3(z|\tau)=1+2\sum_{k=1}^{\infty}\e^{\im\pi\tau k^2}\cos(2\pi kz),\ z\in\mathbb{C};
\end{equation}
where $K=K(a)$ and $E=E(a)$ are the standard complete elliptic integrals (cf. \cite{NIST}) with modulus $a=a(\kappa)\in(0,1)$ and complementary modulus $a'=\sqrt{1-a^2}$.
\begin{theo}[\cite{BDIK}, Theorem $1.4$]\label{theo:princ} Given $\delta\in(0,1)$, there exist positive $s_0=s_0(\delta),c=c(\delta)$ and $C_0=C_0(\delta)$ such that
\begin{equation}\label{princ}
	\ln\det(I-\gamma K_s)=-\frac{s^2}{2}\big(1-a^2\big)+vsV+\ln\theta_3(sV|\tau)+A(v)+\int_s^{\infty}M\Big(tV(vt^{-1}),vt^{-1}\Big)\frac{\d t}{t}+J(s,v)
\end{equation}
with
\begin{equation*}
	A(v)=2\ln G\left(1+\frac{\im v}{\pi}\right)G\left(1-\frac{\im v}{\pi}\right)-\frac{v^2}{\pi^2}\left(3+2\ln\left(\frac{\pi}{v}\right)\right),
\end{equation*}
and
\begin{equation*}
	\left|\int_s^{\infty}M\Big(tV(vt^{-1}),vt^{-1}\Big)\frac{\d t}{t}\right|\leq C_0,\ \ \ \ \ \ \ \big|J(s,v)\big|\leq cs^{-\frac{1}{4}}\ln s,
\end{equation*}
for $s\geq s_0$ and $0<v\leq s(1-\delta)$. Here, the function $M(z,\kappa)$ is explicitly stated in $(1.24)$ of \cite{BDIK} as well as in Appendix \ref{appA} ,\eqref{Mid} in terms of Jacobi theta functions.
\end{theo}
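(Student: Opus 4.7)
}
The natural route is the Deift--Zhou nonlinear steepest descent method applied to the Its--Izergin--Korepin--Slavnov matrix Riemann--Hilbert problem for the sine kernel, combined with a differential identity for the Fredholm determinant. Concretely, the first step is to write
\begin{equation*}
\partial_s \ln\det(I-\gamma K_s) \;=\; \mathcal{F}\bigl[Y(\cdot\,;s,v)\bigr],
\end{equation*}
where $Y(z;s,v)$ is the $2\times 2$ RHP solution whose jump on $(-1,1)$ is $I - \gamma\bigl(\begin{smallmatrix}1 & -1 \\ 1 & -1\end{smallmatrix}\bigr)$ and $\mathcal{F}$ is an explicit functional of the leading coefficients of $Y$ at $z=\infty$ and at $z=\pm 1$. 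This converts the asymptotic problem into the problem of analyzing $Y$ in the regime $s\to\infty$ with $v/s=\kappa\in(0,1-\delta]$.

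The heart of the proof is a two-cut steepest descent: one introduces a $g$-function associated to the equilibrium measure on $[-1,-a]\cup[a,1]$ (with $a=a(\kappa)$ determined by \eqref{e:0}), performs the opening of lenses around the cuts, and then constructs a global parametrix $P^{(\infty)}$ on the elliptic curve $y^2=(z^2-1)(z^2-a^2)$ expressed through the theta function $\theta_3(\,\cdot\,|\tau)$ appearing in \eqref{e:-1}. Local parametrices at the four endpoints $\pm a,\pm 1$ are built from Airy functions. The resulting small-norm RHP for $R = S\,(P^{(\infty)})^{-1}$ off the local disks yields $R = I + \mathcal{O}(s^{-1})$ near the cut interior and $\mathcal{O}(s^{-1/4})$ near the endpoints; substituting into $\mathcal{F}[Y]$ produces
\begin{equation*}
\partial_s \ln\det(I-\gamma K_s) = \Phi(s,v) + \widetilde{M}(sV(v/s),v/s)\, s^{-1} + \mathcal{O}\bigl(s^{-5/4}\ln s\bigr),
\end{equation*}
with $\Phi$ an explicit expression in the $g$-function and theta data. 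Integrating this identity in $s$ from $s$ to $\infty$, the primitive of $\Phi$ gives precisely $-\tfrac{s^2}{2}(1-a^2)+vsV+\ln\theta_3(sV|\tau)$ (the derivative identities for $V(\kappa)$, $\tau(\kappa)$ with respect to $\kappa$ and $s$, together with the heat equation for $\theta_3$, are the mechanism by which the theta contribution emerges).

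The $s$-independent constant of integration is then fixed by matching against \eqref{e:2}, the Basor--Widom/Basor--Buslaev asymptotics, in the regime $v$ fixed, $s\to\infty$, i.e.\ $\kappa\to 0^+$: in this limit $a\to 1^-$, the two cuts merge, $V\to 0$, $\tau\to i\infty$ so that $\theta_3\to 1$, and the two-cut formula must reduce to \eqref{e:2}. Expanding the explicit $v$-dependent terms in the right-hand side of \eqref{princ} as $\kappa\to 0$ and comparing with the Barnes $G$-function expression in \eqref{e:2} forces the integration constant to be
\begin{equation*}
A(v)=2\ln G\!\Bigl(1+\tfrac{iv}{\pi}\Bigr)G\!\Bigl(1-\tfrac{iv}{\pi}\Bigr)-\tfrac{v^2}{\pi^2}\bigl(3+2\ln(\pi/v)\bigr),
\end{equation*}
while the residual integral $\int_s^{\infty} M(tV(v/t),v/t)\frac{\d t}{t}$ is exactly the accumulated contribution of the subleading Airy parametrix correction, uniformly bounded by $C_0$ on account of the $t^{-1}$ integrable decay of $\widetilde{M}/t$.

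\paragraph{\textbf{Main obstacles.}}
The principal difficulty is \emph{uniformity in} $\kappa=v/s\in(0,1-\delta]$. As $v$ becomes small compared to $s$, the endpoints $\pm a$ approach $\pm 1$ and the local Airy disks begin to overlap; the steepest descent contours and the sizes of the local parametrices must be chosen so that all error estimates remain uniform up to the boundary $\kappa\to 0^+$, which is also where the $\ln s$ factor in $|J(s,v)|\le cs^{-1/4}\ln s$ is generated. A secondary technical obstacle is the careful derivation of the theta-function derivative identities needed to recognize the primitive $\ln\theta_3(sV|\tau)$ in the integration step, since $V$ and $\tau$ depend on $\kappa=v/s$ and thus on $s$; this relies on the modular properties of $\theta_3$ and the Picard--Fuchs equations satisfied by $K(a)$ and $E(a)$ along the curve $a=a(\kappa)$.
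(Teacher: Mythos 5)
First, a point of orientation: the paper under review does not itself prove Theorem \ref{theo:princ} --- it is quoted from \cite{BDIK}, and the entire purpose of the present paper is to supply, with proof, two ingredients (Theorem \ref{theo:1} and Proposition \ref{prop:int}) that the proof in \cite{BDIK} relies on. Your sketch of the elliptic-regime steepest descent (differential identity for $\partial_s\ln\det(I-\gamma K_s)$, $g$-function, theta-function outer parametrix, integration in $s$, matching as $\kappa\to 0^+$) is broadly the right architecture where $\kappa=v/s$ is bounded away from $0$, but it has a genuine gap exactly where the present paper's contribution lives. You propose to fix the integration constant $A(v)$ and control the error by a single analysis ``uniform up to $\kappa\to0^+$'', matching against \eqref{e:2}. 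This cannot work as written: \eqref{e:2} is an asymptotic statement for \emph{fixed} $v$ with no uniformity in $v$, while integrating the differential identity in $t$ over $(s,\infty)$ at fixed $v$ inevitably drives you through the region $v/t\to0$, where the band endpoints $\pm a$ collide with $\pm1$ and the elliptic parametrix degenerates. The actual argument splits the range: the two-cut analysis is used only where $\kappa$ is not too small, and in the complementary region one needs the strengthened expansion \eqref{e:3} with the error bound \eqref{e:4}, valid for $v$ \emph{growing} up to $s^{1/3}$ --- precisely Theorem \ref{theo:1}, whose proof occupies Sections \ref{sec:2} and \ref{sec:3}. The bound $|J(s,v)|\le c\,s^{-1/4}\ln s$ arises from optimizing the crossover between the two regimes, not from a local-parametrix error near $\kappa\to0^+$; your plan contains no substitute for this input.

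Two further, more local problems. The parametrices at $\lambda=\pm1$ cannot be built from Airy functions: condition (3) of RHP \ref{master} prescribes a logarithmic singularity there, coming from the jump discontinuities of the symbol (Fisher--Hartwig type), and the correct local model is the confluent hypergeometric one, as in \eqref{p:2}--\eqref{p:4}; an Airy model fails to reproduce that behavior. (Airy-type behavior, if anywhere, belongs at the interior turning points $\pm a$.) Also, ``integrating the primitive of $\Phi$ from $s$ to $\infty$'' is not literally meaningful, since the leading terms grow like $s^2$; one must integrate the difference between $\partial_s\ln\det(I-\gamma K_s)$ and the $s$-derivative of the explicit leading terms, and it is exactly this step that produces the residual term $\int_s^\infty M\bigl(tV(vt^{-1}),vt^{-1}\bigr)\frac{\d t}{t}$, whose uniform boundedness is not automatic from ``integrable decay'' (the integrand is only $O(t^{-1})$ times a bounded oscillatory factor) but is the content of Proposition \ref{prop:int} and the oscillation/averaging argument of Section \ref{sec:4}.
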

The derivation of the bound for $J(s,v)$ in \cite{BDIK} used the following extension of \eqref{e:2} which will be the first result of the current paper.
\begin{theo}\label{theo:1} There exist positive constants $s_0,c_1,c_2$ such that
\begin{equation}\label{e:3}
	\ln\det(I-\gamma K_s)=-\frac{4v}{\pi}s+\frac{2v^2}{\pi^2}\ln(4s)+2\ln G\left(1+\frac{\im v}{\pi}\right)G\left(1-\frac{\im v}{\pi}\right)+r(s,v)
\end{equation}
where $s\geq s_0,0\leq v<s^{\frac{1}{3}}$. The error term $r(s,v)$ is differentiable with respect to $s$ and
\begin{equation}\label{e:4}
	\big|r(s,v)\big|\leq c_1\frac{v}{s}+c_2\frac{v^3}{s}.
\end{equation}
\end{theo}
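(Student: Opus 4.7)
The plan is to prove Theorem \ref{theo:1} by a Deift-Zhou nonlinear steepest descent analysis of the Its-Izergin-Korepin Riemann-Hilbert problem (RHP) $Y=Y(z;s,v)$ associated with the resolvent of $\gamma K_s$, carried out uniformly in $v$ over the range $0\le v<s^{1/3}$. For each fixed $v$ this is the classical Budylin-Buslaev (BB) configuration and the asymptotic \eqref{e:2} is already known; the novel input is to keep track of the $v$-dependence of every object in the analysis with enough precision to recover the polynomial error bound \eqref{e:4}. The first step is to set up the $2\times 2$ RHP on $[-1,1]$ with jump
\[
	Y_+(z)=Y_-(z)\begin{pmatrix}1-\gamma & -\gamma \e^{-2\im sz}\\ \gamma \e^{2\im sz} & 1\end{pmatrix},\qquad Y(z)=I+\mathcal{O}(z^{-1})\text{ as }z\to\infty,
\]
and invoke the standard differential identity expressing $\partial_s\ln\det(I-\gamma K_s)$ as a linear functional of the boundary data of $Y$ at the endpoints $\pm 1$ (equivalently, of the leading term $Y_1$ at infinity). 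The asymptotics of $\ln\det$ will then follow by integration in $s$, with the $v$-dependent integration constant pinned down by matching to \eqref{e:2} on each fixed-$v$ slice.

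Next I would run the steepest descent in BB mode: factorize the oscillatory jump, open lenses off $(-1,1)$ to move the triangular factors onto steepest descent contours, and pass to a model problem whose jump on $(-1,1)$ is the constant matrix $(1-\gamma)^{\sigma_3}=\e^{-2v\sigma_3}$. The outer parametrix is the Fisher-Hartwig solution
\[
	P^{(\infty)}(z)=\left(\frac{z-1}{z+1}\right)^{\im v\sigma_3/\pi},
\]
with branch cut on $(-1,1)$, and local parametrices $P^{(\pm 1)}$ near the endpoints are built from confluent hypergeometric functions (CHF) with parameter $\pm\im v/\pi$; it is the latter that ultimately produce the Barnes $G$-function term in \eqref{e:3}. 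After these transformations the error $R:=Y\,(P^{(\infty)}P^{(\pm 1)})^{-1}$ satisfies a small-norm RHP whose jump is concentrated on the matching circles $\partial\mathcal{D}(\pm 1)$, plus exponentially small contributions from the lens boundaries.

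The principal obstacle is to bound this small-norm problem \emph{uniformly} in $v$ up to $v<s^{1/3}$. On each matching circle the parametrix ratio admits an expansion
\[
	P^{(\pm 1)}(z)P^{(\infty)}(z)^{-1}=I+\sum_{k\geq 1}\frac{A_k^{(\pm)}(v;z)}{s^k},
\]
and the key point is that the coefficients $A_k^{(\pm)}(v;z)$ grow only polynomially (not exponentially) in $v$: this polynomial growth has to be extracted from the $\Gamma(1\pm\im v/\pi)$ factors in the CHF data via the Barnes $G$-function reflection identity and the precise CHF connection formulas, so that a careful accounting yields $\|v_R-I\|_{L^\infty\cap L^2}\le C(v+v^3)/s$. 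This remains $o(1)$ precisely when $v^2/s=o(1)$, i.e.\ $v<s^{1/3}$, and the small-norm theorem then delivers $R(z)=I+\mathcal{O}\bigl((v+v^3)/s\bigr)$ uniformly on the relevant contour. Substituting back into the differential identity produces $|\partial_s r(s,v)|\le C(v+v^3)/s^2$; integrating in $s$ from $s$ to $\infty$ and fixing the $v$-dependent integration constant by matching to \eqref{e:2} on fixed-$v$ slices yields \eqref{e:3} together with the bound \eqref{e:4} and the asserted differentiability of $r$ in $s$.
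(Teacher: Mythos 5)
Your strategy is the paper's first derivation (Section \ref{sec:2}): the IIKS Riemann--Hilbert problem, lens opening, the outer parametrix $\big(\frac{\lambda-1}{\lambda+1}\big)^{\im v\sigma_3/\pi}$, confluent hypergeometric local parametrices, a small-norm argument, then integration of $\frac{\partial}{\partial s}\ln\det(I-\gamma K_s)=-2\im Y_1^{11}$ and matching against \eqref{e:2}. The gap is your central claim that the jump of the ratio problem on the matching circles is $I+\mathcal{O}\big((v+v^3)/s\big)$ uniformly for $0\le v<s^{1/3}$. The polynomial growth you extract from the Gamma factors is real (indeed $|\Gamma(1+\nu)/\Gamma(-\nu)|=|\nu|$ for $\nu=\im v/\pi$), but it is not the whole story: the off-diagonal entries of $P^{(\pm1)}\big(P^{(\infty)}\big)^{-1}-I$ also carry the factor $\big(2s(\lambda+1)\big)^{\mp2\nu}$, whose modulus on a circle of radius $r$ about $\pm1$ is $\exp\big[\frac{2v}{\pi}|\arg(\lambda\pm1)|\big]\approx\e^{cvr}$. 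If $r$ is fixed this is exponentially large in $v$ (up to $\e^{cs^{1/3}}$), so your polynomial bound fails; if instead one shrinks $r\sim v^{-1}\sim s^{-1/3}$ to tame it --- as the paper does --- the local expansion parameter becomes $1/(sr)=s^{-2/3}$ and the jump deviation is only $\mathcal{O}\big(v^2s^{-2/3}\big)$, which is $\mathcal{O}(1)$, not small, as $v\uparrow s^{1/3}$. This is precisely why the paper's RH derivation reaches only $0\le v<s^{1/3-\epsilon}$ with $\epsilon$-dependent constants and the weaker bound $|r(s,v)|\le cv^3/s$ (see \eqref{p:10}, \eqref{p:100} and the Remark following them); your proposal asserts, without justification, the uniform estimate that this route does not deliver.

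The paper obtains the full statement --- the whole range $0\le v<s^{1/3}$, uniform constants, and the bound $c_1\frac{v}{s}+c_2\frac{v^3}{s}$ --- by a genuinely different route: it writes $\det(I-\gamma K_s)=\lim_{n\to\infty}D_n(f^{2s/n})$ as a limit of Toeplitz determinants with two merging Fisher--Hartwig jump singularities (Lemma \ref{lemT}) and then strengthens the Claeys--Krasovsky analysis \cite{CK} by tracking the dependence of all error terms on $\beta=\im v/\pi$. There the differential identity is taken in the separation parameter $t$ rather than in $s$, the confluent-hypergeometric estimates give errors of size $\frac1t\big\{\mathcal O\big(\frac{\beta}{nt}\big)+\mathcal O\big(\frac{\beta^3}{nt}\big)\big\}$, and integration over $\frac{2s}{n}<t<t_0$ followed by $n\to\infty$ yields exactly \eqref{e:4}. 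To salvage your route at the endpoint of the range you would need either a sharper small-norm argument on the matching circles (exploiting the oscillatory $\e^{\pm2\im s}$ factors and the cancellation of the $\beta^{\pm2}$ factors in the diagonal residues, followed by integration in $s$) or to switch to the Toeplitz representation as the paper does.
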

Note that \eqref{e:3} reduces to \eqref{e:2} if $v$ is fixed. However \eqref{e:3}, \eqref{e:4} is a considerably stronger result than \eqref{e:2} since $v$ is also allowed to grow at a certain rate. As noted in \cite{BDIK}, Theorem \ref{theo:1} is needed, in particular, to make rigorous a very interesting argument of Bohigas and Pato \cite{BP} which interprets the transition \eqref{e:2} to \eqref{e:1} as a transition from a system with Poisson statistics to a system with random matrix theory GUE statistics. We will provide two derivations for Theorem \ref{theo:1}: one based on a nonlinear steepest descent analysis of Riemann-Hilbert Problem $1.3$ in \cite{BDIK}, see also RHP \ref{master} below, but using very different techniques than those employed in \cite{BDIK}. This approach will provide us with expansion \eqref{e:3} involving a slightly weaker estimate for $r(s,v)$ than \eqref{e:4} (see \eqref{p:100} below). Still, the result of this first derivation is sufficient for the arguments used in \cite{BDIK}, Section $5.2$ in the derivation of the bound for $J(s,v)$ in Theorem \ref{theo:princ}. Second, Theorem \ref{theo:1} will also follow by the application of recent results from \cite{CK} using the connection of $\det(I-\gamma K_s)$ to a Toeplitz determinant with Fisher-Hartwig singularities. We confirm in this way \eqref{e:3} with the stated error estimate \eqref{e:4}.\smallskip

Our next result will provide further insight into the integral term involving $M(z,\kappa)$.
\begin{prop}\label{prop:int} Let $a_0(u)=\int_0^1M(x,u)\d x$ denote an average of $M(x,\kappa)$ over the ``fast" variable $x$. Given $\delta\in(0,1)$, we have
\begin{equation*}
	\int_s^{\infty}M\Big(tV(vt^{-1}),vt^{-1}\Big)\frac{\d t}{t}=\int_0^{\kappa}a_0(u)\frac{\d u}{u}+\mathcal{O}\left(s^{-1}\right)
\end{equation*}
uniformly for $s\geq s_0>0$ and $0<v\leq s(1-\delta)$.
\end{prop}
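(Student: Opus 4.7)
The plan is to exploit the two-scale structure of the integrand, with $x=tV(v/t)$ the fast variable and $u=v/t$ the slow variable, via the averaging principle. The substitution $u=v/t$ (so $\d t/t=-\d u/u$ and $tV(v/t)=vV(u)/u$) rewrites the integral as
\begin{equation*}
  \int_s^\infty M\bigl(tV(vt^{-1}),vt^{-1}\bigr)\frac{\d t}{t}=\int_0^\kappa M\!\left(\frac{vV(u)}{u},u\right)\frac{\d u}{u}.
\end{equation*}
Since $\theta_3(z|\tau)$ is $1$-periodic in $z$, the explicit theta-function expression for $M(z,u)$ shows $M$ is $1$-periodic in its first argument, so we Fourier-expand
\begin{equation*}
  M(x,u)=a_0(u)+\sum_{k\neq 0}m_k(u)\,\e^{2\pi\im kx},
\end{equation*}
with coefficients inheriting from the $\e^{\im\pi\tau k^2}$-structure of $\theta_3$ the estimate $|m_k(u)|\leq C\e^{-\pi k^2\mathrm{Im}\,\tau(u)}$ (with an analogous bound for $m_k'$). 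The constant Fourier mode reproduces exactly $\int_0^\kappa a_0(u)\,\d u/u$, so the task reduces to showing that, for each $k\neq 0$,
\begin{equation*}
  I_k:=\int_0^\kappa m_k(u)\,\e^{2\pi\im kv\psi(u)}\,\frac{\d u}{u},\qquad \psi(u):=\frac{V(u)}{u},
\end{equation*}
is $\mathcal{O}(s^{-1})$ with $k$-summable implicit constants.

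The central analytic fact is that $\psi'(u)=(uV'(u)-V(u))/u^2$ is strictly positive on $(0,1)$, precluding stationary phase. Using $\d E/\d a=(E-K)/a$ and $\d K/\d a=(E-(1-a^2)K)/(a(1-a^2))$ one computes $\d[E-(1-a^2)K]/\d a=aK$, which combined with the implicit derivative $\d a/\d\kappa$ from \eqref{e:0} yields $V'(\kappa)>0$ on $(0,1)$; together with $V(\kappa)\in(-2/\pi,0)$ there this gives $uV'-V\geq -V>0$. A single integration by parts against $\d\e^{2\pi\im kv\psi(u)}/(2\pi\im kv\psi'(u))$ then produces
\begin{equation*}
  I_k=\left[\frac{m_k(u)\,\e^{2\pi\im kv\psi(u)}}{2\pi\im kvu\psi'(u)}\right]_0^\kappa-\frac{1}{2\pi\im kv}\int_0^\kappa\frac{\d}{\d u}\!\left[\frac{m_k(u)}{u\psi'(u)}\right]\e^{2\pi\im kv\psi(u)}\,\d u.
\end{equation*}
The boundary term at $u=\kappa$ simplifies, using $v=\kappa s$, to $m_k(\kappa)\,\e^{2\pi\im ksV(\kappa)}/\{2\pi\im ks(\kappa V'(\kappa)-V(\kappa))\}=\mathcal{O}(|m_k(\kappa)|/(ks))$; the boundary at $u=0$ vanishes because $1/(u\psi'(u))=u/(uV'-V)\to 0$ while $|m_k(u)|$ decays polynomially in $u$ (via $\mathrm{Im}\,\tau(u)\to\infty$ as $u\to 0^+$). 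The interior integrand $\tfrac{\d}{\d u}[m_k/(u\psi')]$ is bounded by some $C_k$ uniformly on $(0,\kappa]$, with $C_k$ decaying super-polynomially in $k$, so the interior contribution is at most $C_k\kappa/(kv)=C_k/(ks)$. Summing yields $\sum_{k\neq 0}|I_k|=\mathcal{O}(1/s)$ uniformly for $0<v\leq s(1-\delta)$.

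The principal technical hurdle is the sign and uniform quantitative lower bound for $\psi'$ on $(0,1-\delta]$, and in particular a matching uniform behavior as $u\to 0^+$, where $V(u)/u\to -\infty$ and $\mathrm{Im}\,\tau(u)\to\infty$ simultaneously. Assuming these elliptic-function facts, together with standard bounds on the theta-coefficients $m_k$ and their derivatives, the remainder is routine integration-by-parts bookkeeping; the crucial uniformity in $v$ over the whole range $0<v\leq s(1-\delta)$ is built into the algebraic identity $\kappa/v=1/s$, which converts the naive ``$1/v$'' smallness coming from integration by parts into the desired ``$1/s$'' smallness.
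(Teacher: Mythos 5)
Your proposal follows essentially the same route as the paper's proof: Fourier-expand $M$ in its periodic first argument, identify the zero mode with $\int_0^{\kappa}a_0(u)\,\d u/u$, and dispose of the nonzero modes by a single nonstationary-phase integration by parts, with the identity $\kappa/v=1/s$ converting the gain into $\mathcal{O}(s^{-1})$. (The paper integrates by parts in the $t$-variable, using that $\frac{\d}{\d t}\big[tV(vt^{-1})\big]$ is expressible through $c(vt^{-1})$ with $1/c$ uniformly bounded, rather than substituting $u=v/t$ first; that difference is cosmetic.) Two caveats. First, the bound $|m_k(u)|\leq C\e^{-\pi k^2\mathrm{Im}\,\tau(u)}$ is not justified as stated: $M$ is a ratio of theta functions and their derivatives, so its Fourier coefficients do not simply inherit the Gaussian decay of the coefficients of $\theta_3$. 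Fortunately only polynomial decay is needed, and the paper obtains $|a_n(\kappa)|,|\frac{\d}{\d\kappa}a_n(\kappa)|\leq C/(1+n^2)$ by two integrations by parts in $x$ once $\partial_x^2M$ and $\partial_\kappa\partial_x^2M$ are uniformly bounded. Second --- and this is where the substance of the paper's proof actually lies --- the uniform bounds on $M$ and its $x$- and $\kappa$-derivatives over the whole range $\kappa\in(0,1-\delta]$ (Proposition \ref{p42}), as well as the boundedness of $1/c(\kappa)$ and the behavior of the phase as $u\downarrow 0$, are not routine: they require the expansions \eqref{m:2}--\eqref{m:4} of $a(\kappa)$, $V(\kappa)$, $\tau(\kappa)$, $c(\kappa)$ and the theta-function estimates \eqref{m:6}--\eqref{m:9}, precisely to control the simultaneous degenerations ($V(u)/u\to-\infty$, $\mathrm{Im}\,\tau(u)\to\infty$) that you flag as the ``principal technical hurdle'' and then assume away. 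So the skeleton is correct and matches the paper, but the part you defer is the part that constitutes most of the proof.
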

It is worth noticing that Proposition \ref{prop:int} implies that the integral term in the right hand side of \eqref{princ} depends, up to $\mathcal{O}(s^{-1})$, on the ``slow" variable $\kappa=\frac{v}{s}$ only. Hence, Theorem \ref{theo:princ} captures in an explicit way all principal features of the leading asymptotic behavior of the determinant $\det(I-\gamma K_s)$ in the bulk scaling limit.
\begin{rem} We conjecture that
\begin{equation*}
	a_0(\kappa)=0,
\end{equation*}
so that the integral term in \eqref{princ} does not contribute at all to the leading order asymptotics of the sine-kernel determinant. This conjecture has been checked to leading order in the limit $\kappa\downarrow 0$. We have
\begin{equation*}
	M(x,\kappa)=-\frac{\kappa}{6\pi}\cos(2\pi x)+\mathcal{O}\left(\kappa^2\right),\ \ \kappa\downarrow 0,
\end{equation*}
and thus $\int_0^1M(x,\kappa)\d x=\mathcal{O}\left(\kappa^2\right)$.
\end{rem}
\subsection{Outline of the paper} We complete the introduction with a short outline for the remainder. As mentioned above, the leading terms in Theorem \ref{theo:1} will be derived in two ways: First, in Section \ref{sec:2}, we apply Riemann-Hilbert techniques related to the integrable structure of $K_s(\lambda,\mu)$, cf. \cite{IIKS}. The $s$ derivative of $\ln\det(I-\gamma K_s)$ is expressible in terms of the solution of RHP \ref{master} below which allows us to compute its asymptotic by application of nonlinear steepest descent techniques \cite{DZ}. Integrating the asymptotic series indefinitely with respect to $s$ and comparing the result to \eqref{e:2} we obtain \eqref{e:3} with a slightly weaker error estimate for $r(s,v)$. Second, in Section \ref{sec:3}, we use the representation of $\det(I-\gamma K_s)$ as a limit of a Toeplitz determinant with Fisher-Hartwig singularities. We strengthen the results obtained in \cite{CK} and derive Theorem \ref{theo:1} with the stated error estimate \eqref{e:4}. In Section \ref{sec:4} we provide a proof of Proposition \ref{prop:int} which uses a Fourier series representation of $M(z,\kappa)$. Several theta function expressions which are used in the definition of $M(z,\kappa)$ are summarized in Appendix \ref{appA}.

\section{Extending \eqref{e:2} by integrable operator techniques}\label{sec:2}

We first recall the central Riemann-Hilbert problem related to the integrable structure of the sine kernel determinant $\det(I-\gamma K_s)$, compare \cite{IIKS}.
\begin{problem}\label{master} Determine $Y(\lambda)=Y(\lambda;s,\gamma)\in\mathbb{C}^{2\times 2}$ such that
\begin{enumerate}
	\item $Y(\lambda)$ is analytic for $\lambda\in\mathbb{C}\backslash[-1,1]$ with square integrable limiting values 
	\begin{equation*}
		Y_{\pm}(\lambda)=\lim_{\varepsilon\downarrow 0}Y(\lambda\pm\im\varepsilon),\ \ \lambda\in(-1,1).
	\end{equation*}
	\item The boundary values $Y_{\pm}(\lambda)$ are related by the jump condition
	\begin{equation*}
		Y_+(\lambda)=Y_-(\lambda)\begin{pmatrix} 1-\gamma & \gamma\e^{2\im\lambda s}\smallskip\\ -\gamma\e^{-2\im\lambda s} & 1+\gamma \end{pmatrix},\ \ \lambda\in(-1,1).
	\end{equation*}
	\item Near the endpoints $\lambda=\pm 1$, we have
	\begin{equation*}
		Y(\lambda)=\check{Y}(\lambda)\left[I+\frac{\gamma}{2\pi\im}\begin{pmatrix}-1 & 1\\ -1 & 1 \end{pmatrix}\ln\left(\frac{\lambda-1}{\lambda+1}\right)\right]\e^{-\im s\lambda\sigma_3},\ \ \lambda\rightarrow\pm 1;\hspace{0.7cm}\sigma_3=\bigl(\begin{smallmatrix} 1 & 0\smallskip\\ 0 & -1 \end{smallmatrix}\bigr),
	\end{equation*}
	where $\check{Y}(\lambda)$ is analytic at $\lambda=\pm 1$ and $\ln$ denotes the principal branch for the logarithm.
	\item Near $\lambda=\infty$,
	\begin{equation*}
		Y(\lambda)=I+Y_1\lambda^{-1}+\mathcal{O}\left(\lambda^{-2}\right),\ \ \lambda\rightarrow\infty;\ \ \ \ \ \ Y_1=\big(Y_1^{jk}\big).
	\end{equation*}
\end{enumerate}
\end{problem}
Our goal is to solve this problem asymptotically for sufficiently large $s\geq s_0$ and $0\leq v<s^{\frac{1}{3}}$. This will be achieved by an application of the nonlinear steepest descent method of Deift and Zhou \cite{DZ} to RHP \ref{master}. Our approach is somewhat similar to \cite{BI}.
\subsection{Matrix factorization and opening of lens} We make use of a factorization of the jump matrix $G_Y(\lambda)$ occurring in RHP \ref{master},
\begin{equation*}
	G_Y(\lambda)=\begin{pmatrix} 1-\gamma & \gamma\e^{2\im\lambda s}\smallskip\\ -\gamma\e^{-2\im\lambda s} & 1+\gamma \end{pmatrix}=\begin{pmatrix} 1 & 0\\ -\gamma\e^{2s(\kappa-\im\lambda)} & 1 \end{pmatrix}\e^{-2\kappa s\sigma_3}\begin{pmatrix} 1 & \gamma\e^{2s(\kappa+\im\lambda)}\\ 0 & 1 \end{pmatrix}\equiv S_L(\lambda)S_DS_U(\lambda).
\end{equation*}
Since $S_L(\lambda)$ and $S_U(\lambda)$ clearly admit analytical continuations to the lower and upper $\lambda$-planes respectively, we can perform a first transformation of the initial RHP \ref{master}: introduce
\begin{equation*}
	S(\lambda)=Y(\lambda)\begin{cases} S_U^{-1}(\lambda),&\lambda\in\Omega_1\\ S_L(\lambda),&\lambda\in\Omega_2\\ I,& \textnormal{else}\end{cases}
\end{equation*}
where the domains $\Omega_j\subset\mathbb{C}$ are sketched in Figure \ref{figure1} below. This leads to the following problem:
\begin{figure}[tbh]
\begin{center}
\resizebox{0.325\textwidth}{!}{\includegraphics{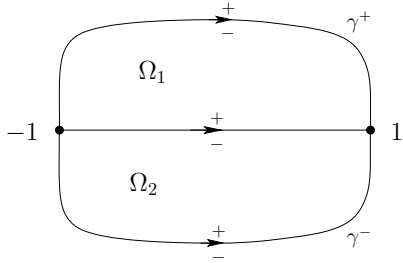}}
\caption{The oriented jump contour $\Sigma_S$ for the function $S(\lambda)$ in the complex $\lambda$-plane.}
\label{figure1}
\end{center}
\end{figure}
\begin{problem}\label{open} Determine $S(\lambda)=S(\lambda;s,\gamma)\in\mathbb{C}^{2\times 2}$ such that
\begin{enumerate}
	\item $S(\lambda)$ is analytic for $\lambda\in\mathbb{C}\backslash\Sigma_S$ where the jump contour $\Sigma_S$ is shown in Figure \ref{figure1}.
	\item The limiting values $S_{\pm}(\lambda),\lambda\in\Sigma_S$ are square integrable and satisfy the jump conditions
	\begin{equation*}
		S_+(\lambda)=S_-(\lambda)S_U(\lambda),\ \ \lambda\in\gamma^+;\ \ \ S_+(\lambda)=S_-(\lambda)S_L(\lambda),\ \ \lambda\in\gamma^-;\ \ \ S_+(\lambda)=S_-(\lambda)S_D,\ \ \lambda\in(-1,1).
	\end{equation*}
	\item Near $\lambda=\pm 1$,
	\begin{equation*}
		S(\lambda)=\check{Y}(\lambda)\left[I+\frac{\gamma}{2\pi\im}\begin{pmatrix} -1 & 1\\ -1 & 1 \end{pmatrix}\ln\left(\frac{\lambda-1}{\lambda+1}\right)\right]\e^{-\im s\lambda\sigma_3}\begin{cases} S_U^{-1}(\lambda),& \lambda\in\Omega_1\\ S_L(\lambda),&\lambda\in\Omega_2\\ I,&\textnormal{else}\end{cases}.
	\end{equation*}
	\item As $\lambda\rightarrow\infty$, we have $S(\lambda)\rightarrow I$.
\end{enumerate}
\end{problem}
If $v\in(0,\infty)$ were kept fixed throughout we would already know at this point that the major contribution to the asymptotic solution of RHP \ref{open} arises from the segment $(-1,1)$ and two small neighborhoods of the endpoints $\lambda=\pm 1$. Indeed we have
\begin{equation*}
	\big|\e^{2s(\kappa\pm\im\lambda)}\big| = \e^{2v}\e^{\mp 2s\Im\lambda},
\end{equation*}
and thus exponentially fast decay to zero on $\gamma^+\backslash D(\pm 1,r)$ in the upper half-plane and $\gamma^-\backslash D(\pm 1,r)$ in the lower half-plane. Here, $D(\pm 1,r)=\{\lambda\in\mathbb{C}:\ |\lambda\mp 1|<r\}$. We will return to this observation after the next subsection.
\subsection{Local Riemann-Hilbert analysis} We require three explicit model functions among which
\begin{equation}\label{p:1}
	P^{(\infty)}(\lambda)=\left(\frac{\lambda-1}{\lambda+1}\right)^{\frac{\im v}{\pi}\sigma_3},\ \ \ \lambda\in\mathbb{C}\backslash[-1,1]:\hspace{0.5cm}P^{(\infty)}(\lambda)\rightarrow I,\ \ \lambda\rightarrow+\infty,
\end{equation}
will serve as the outer parametrix. This function reproduces exactly the jump behavior of $S(\lambda)$ on the segment $(-1,1)$. For the local parametrices near the endpoints we follow and adapt the constructions in \cite{BI}, Sections $9$ and $10$. First we define on the punctured plane $\z\in\mathbb{C}\backslash\{0\}$ with $-\pi<\textnormal{arg}\,\z\leq\pi$,
\begin{equation}\label{p:2}
	P(\z)=\begin{pmatrix} U(-\nu;\e^{-\im\frac{\pi}{2}}\z) & -U(1+\nu;\e^{\im\frac{\pi}{2}}\z)\im\e^{\im\pi\nu}\frac{\Gamma(1+\nu)}{\Gamma(-\nu)}\smallskip\\ U(1-\nu;\e^{-\im\frac{\pi}{2}}\z)\im\e^{\im\pi\nu}\frac{\Gamma(1-\nu)}{\Gamma(\nu)} & U(\nu;\e^{\im\frac{\pi}{2}}\z)\e^{2\pi\im\nu} \end{pmatrix} \e^{\frac{\im}{2}\z\sigma_3}\e^{\im\frac{\pi}{2}(\frac{1}{2}-\nu)\sigma_3}\mathcal{S}(\z)
\end{equation}
with
\begin{equation*}
	\nu=\frac{\im v}{\pi};\ \ \ \ \ \mathcal{S}(\z)=\begin{cases}\Bigl(\begin{smallmatrix} 1 & \frac{2\pi\im\e^{\im\pi\nu}}{\Gamma(\nu)\Gamma(1-\nu)} \\ 0 & 1 \end{smallmatrix}\Bigr)\Bigl(\begin{smallmatrix} \e^{\im\frac{\pi}{2}\nu} & 0 \\ 0 & \e^{-\im\frac{3\pi}{2}\nu} \end{smallmatrix}\Bigr),&\textnormal{arg}\,\z\in(\frac{\pi}{2},\pi)\smallskip\\ \Bigl(\begin{smallmatrix} 1 &0\\ \frac{2\pi\im\,\e^{-3\pi\im\nu}}{\Gamma(\nu)\Gamma(1-\nu)} & 1 \end{smallmatrix}\Bigr)\Bigl(\begin{smallmatrix}\e^{\im\frac{\pi}{2}\nu} & 0 \\ 0 & \e^{-\im\frac{3\pi}{2}\nu} \end{smallmatrix}\Bigr),&\textnormal{arg}\,\z\in(-\pi,-\frac{\pi}{2})\smallskip\\ \Bigl(\begin{smallmatrix} \e^{\im\frac{\pi}{2}\nu} & 0 \\ 0 & \e^{-\im\frac{3\pi}{2}\nu} \end{smallmatrix}\Bigr),&\textnormal{arg}\,\z\in(-\frac{\pi}{2},\frac{\pi}{2})\end{cases}
\end{equation*}
and $U(a;\z)=U(a,1;\z)\sim\z^{-a}$ as $\z\rightarrow\infty,\textnormal{arg}\,\z\in(-\frac{3\pi}{2},\frac{3\pi}{2})$, is the confluent hypergeometric function, cf. \cite{NIST}. Using standard asymptotic and monodromy properties of $U(a;\z)$, see \cite{NIST}, one can check that the model function $P(\z)$ is a solution to the ``bare" model problem below (cf. \cite{DIII}).
\begin{problem}\label{bareconf} The function $P(\z)$ defined in \eqref{p:2} has the following properties:
\begin{enumerate}
	\item $P(\z)$ is analytic for $\z\in\mathbb{C}\backslash\{\textnormal{arg}\,\z=-\pi,-\frac{\pi}{2},\frac{\pi}{2}\}$ and the orientation of the three rays is fixed as indicated in Figure \ref{figure2} below.
	\item $P(\z)$ has $\z$-independent jumps,
	\begin{equation*}
		P_+(\z)=P_-(\z)\bigl(\begin{smallmatrix} 1 & \gamma\e^{2\kappa s}\\ 0 & 1 \end{smallmatrix}\bigr),\ \ \textnormal{arg}\,\z=\frac{\pi}{2};\ \ \ \ \ P_+(\z)=P_-(\z)\bigl(\begin{smallmatrix}1 & 0\\ -\gamma\e^{2\kappa s} & 1 \end{smallmatrix}\bigr),\ \ \textnormal{arg}\,\z=-\frac{\pi}{2},
	\end{equation*}
	and,
	\begin{equation*}
		P_+(\z)=P_-(\z)\e^{-2\kappa s\sigma_3},\ \ \textnormal{arg}\,\z=-\pi.
	\end{equation*}
	\begin{figure}[tbh]
\begin{center}
\resizebox{0.27\textwidth}{!}{\includegraphics{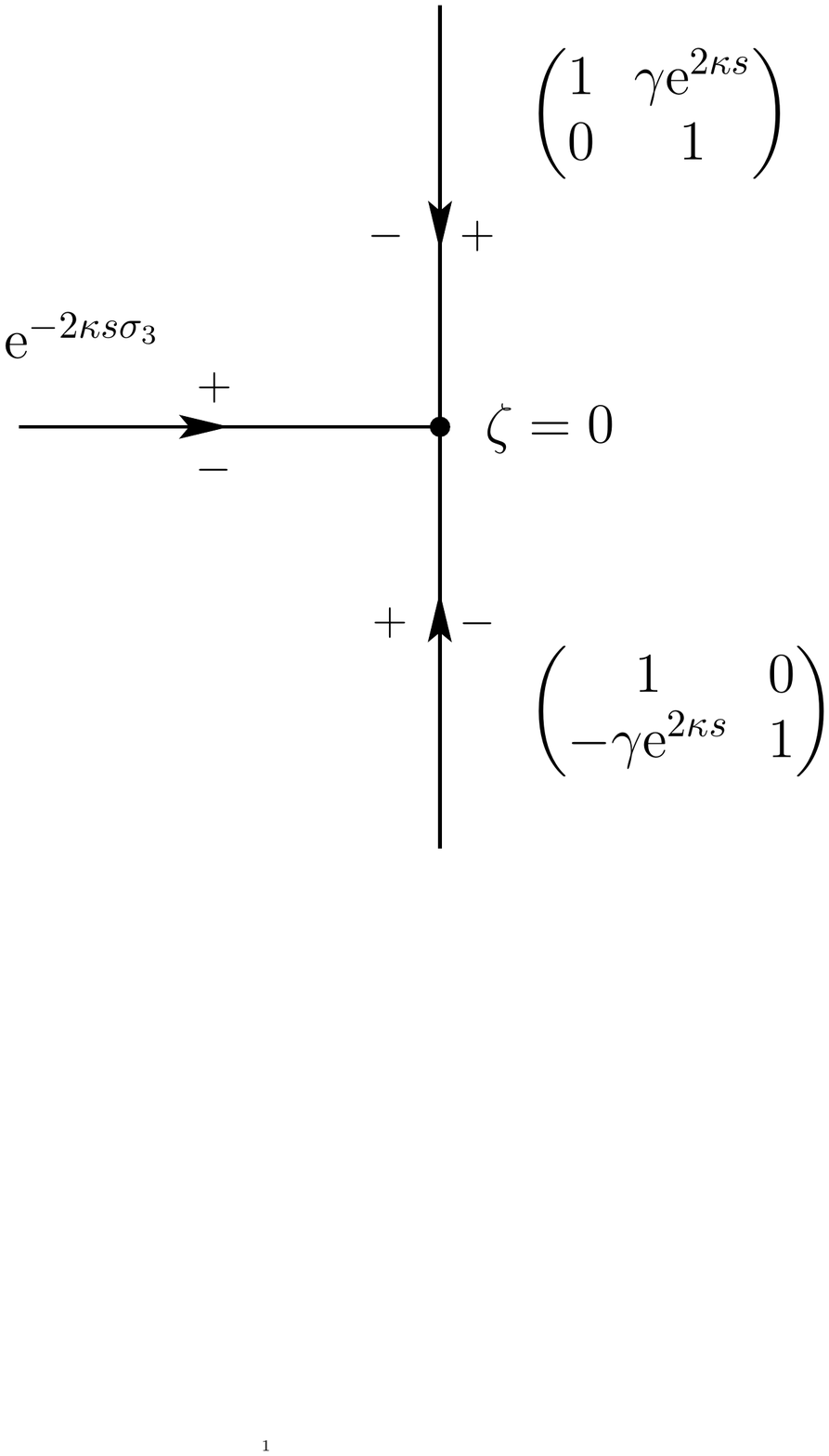}}
\caption{Jump behavior of $P(\z)$ in the complex $\z$-plane.}
\label{figure2}
\end{center}
\end{figure}
	\item Near $\z=0$, with $-\pi<\textnormal{arg}\,\z\leq\pi$,
	\begin{equation}\label{p:3}
		P(\z)=\check{P}(\z)\left[I+\frac{\gamma}{2\pi\im}\begin{pmatrix} -1 & 1\\ -1 & 1 \end{pmatrix}\ln\z\right]\begin{cases}\bigl(\begin{smallmatrix} 1 & -\gamma\e^{2\kappa s}\\ 0 & 1 \end{smallmatrix}\bigr),&\textnormal{arg}\,\z\in(\frac{\pi}{2},\pi)\smallskip\\ \bigl(\begin{smallmatrix} 1 & 0\\ -\gamma\e^{2\kappa s} & 1 \end{smallmatrix}\bigr),&\textnormal{arg}\,\z\in(-\pi,-\frac{\pi}{2})\smallskip\\ I,&\textnormal{arg}\,\z\in(-\frac{\pi}{2},\frac{\pi}{2})\end{cases}
	\end{equation}
	where $\check{P}(\z)$ is analytic at $\z=0$. 
	\item As $\z\rightarrow\infty$,
	\begin{align*}
		P(\z)\sim&\,\left[I+\sum_{k=1}^{\infty}\begin{pmatrix} \big((-\nu)_k\big)^2\e^{-\im\frac{\pi}{2}k} & \im\big((1+\nu)_{k-1}\big)^2k\,\e^{\im\frac{\pi}{2}k-\im\pi\nu}\frac{\Gamma(1+\nu)}{\Gamma(-\nu)}\\ -\im\big((1-\nu)_{k-1}\big)^2k\,\e^{-\im\frac{\pi}{2}k+\im\pi\nu}\frac{\Gamma(1-\nu)}{\Gamma(\nu)} & \big((\nu)_k\big)^2\e^{\im\frac{\pi}{2}k} \end{pmatrix}\frac{\z^{-k}}{k!}\right]\\
		&\,\times\,\z^{\nu\sigma_3}\e^{\frac{\im}{2}\z\sigma_3}\e^{\im\frac{\pi}{2}(\frac{1}{2}-\nu)\sigma_3};
	\end{align*}
	where $(a)_k=a(a+k)(a+2)\cdot\ldots\cdot(a+k-1)$ is Pochhammer's symbol. Observe that in our situation $\gamma\in[0,1]$ and hence we have $\nu\in\im\mathbb{R}$.
\end{enumerate}
\end{problem}
In terms of $P(\z)$ the actual parametrix near $\lambda=1$ is then defined as follows,
\begin{equation}\label{p:4}
	P^{(1)}(\lambda)=\big(2s(\lambda+1)\big)^{-\nu\sigma_3}\e^{-\im\frac{\pi}{2}(\frac{1}{2}-\nu)\sigma_3}\e^{\im s\sigma_3}P\big(\z(\lambda)\big)\e^{-\frac{\im}{2}(\z(\lambda)+2s)\sigma_3},\ \ |\lambda-1|<\frac{1}{4}
\end{equation}
where $\z=\z(\lambda)=2s(\lambda-1)$ denotes the locally conformal change of variables from the $\lambda$- to the $\z$-plane. Using the properties listed in RHP \ref{bareconf} we immediately establish that the initial function $S(\lambda)$ in RHP \ref{open} and \eqref{p:4} are related by an analytic left multiplier
\begin{equation*}
	S(\lambda)=N_1(\lambda)P^{(1)}(\lambda),\ \ \ |\lambda-1|<\frac{1}{4}.
\end{equation*}
Moreover, as $s\rightarrow\infty$, and hence $|\z|\rightarrow\infty$, with $0<r_1\leq|\lambda-1|\leq r_2<\frac{1}{4}$, we have asymptotic matching of the model functions $P^{(\infty)}(\lambda)$ and $P^{(1)}(\lambda)$: with $\beta(\lambda)=\big(2s(\lambda+1)\big)^{\nu}$,
\begin{align}
	P^{(1)}(\lambda)\sim&\,\Bigg[I+\sum_{k=1}^{\infty}\begin{pmatrix}\big((-\nu)_k\big)^2\e^{-\im\frac{\pi}{2}k} & \big((1+\nu)_{k-1}\big)^2k\,\e^{\im\frac{\pi}{2}k}\beta^{-2}(\lambda)\e^{2\im s}\frac{\Gamma(1+\nu)}{\Gamma(-\nu)}\\ \big((1-\nu)_{k-1}\big)^2k\,\e^{-\im\frac{\pi}{2}k}\beta^2(\lambda)\e^{-2\im s}\frac{\Gamma(1-\nu)}{\Gamma(\nu)} & \big((\nu)_k\big)^2\e^{\im\frac{\pi}{2}k} \end{pmatrix}\frac{\z^{-k}}{k!}\Bigg]\nonumber\\
	&\,\times P^{(\infty)}(\lambda).\label{p:5}
\end{align}
Near the remaining endpoint $\lambda=-1$ we can either carry out explicitly a similar construction or simply use symmetry: for $|\lambda+1|<\frac{1}{4}$, introduce the parametrix as
\begin{equation}\label{p:6}
	P^{(-1)}(\lambda)=\sigma_1P^{(1)}(-\lambda)\sigma_1,\ \ \ \ \ \ \sigma_1=\bigl(\begin{smallmatrix} 0 & 1\\ 1 & 0 \end{smallmatrix}\bigr)
\end{equation}
and obtain at once
\begin{equation*}
	S(\lambda)=N_2(\lambda)P^{(-1)}(\lambda),\ \ \ |\lambda+1|<\frac{1}{4},
\end{equation*}
as well as for $s\rightarrow\infty$ with $0<r_1\leq|\lambda+1|\leq r_2<\frac{1}{4}$,
\begin{align}
	P^{(-1)}(\lambda)\sim&\,\Bigg[I+\sum_{k=1}^{\infty}\begin{pmatrix}\big((\nu)_k\big)^2\e^{\im\frac{\pi}{2}k} &\big((1-\nu)_{k-1}\big)^2k\,\e^{-\im\frac{\pi}{2}k}\beta^2(-\lambda)\e^{-2\im s}\frac{\Gamma(1-\nu)}{\Gamma(\nu)}\\  \big((1+\nu)_{k-1}\big)^2k\,\e^{\im\frac{\pi}{2}k}\beta^{-2}(-\lambda)\e^{2\im s}\frac{\Gamma(1+\nu)}{\Gamma(-\nu)} & \big((-\nu)_k\big)^2\e^{-\im\frac{\pi}{2}k} \end{pmatrix}\nonumber\\
	&\,\times\frac{\z(-\lambda)^{-k}}{k!}\Bigg]P^{(\infty)}(\lambda).\label{p:7}
\end{align}
This completes the construction of the explicit model functions $P^{(\infty)}(\lambda),P^{(1)}(\lambda)$ and $P^{(-1)}(\lambda)$.
\subsection{Ratio transformation and small norm estimations} We use \eqref{p:1}, \eqref{p:4}, \eqref{p:6} and define in this step
\begin{equation}\label{p:8}
	R(\lambda)=S(\lambda)\begin{cases}\big(P^{(1)}(\lambda)\big)^{-1},&|\lambda-1|<r\\ \big(P^{(-1)}(\lambda)\big)^{-1},&|\lambda+1|<r\\ \big(P^{(\infty)}(\lambda)\big)^{-1},&|\lambda\mp 1|>r\end{cases}
\end{equation}
with $0<r<\frac{1}{4}$. Recalling the results of the previous subsection we are lead to the problem below.
\begin{figure}[tbh]
\begin{center}
\resizebox{0.35\textwidth}{!}{\includegraphics{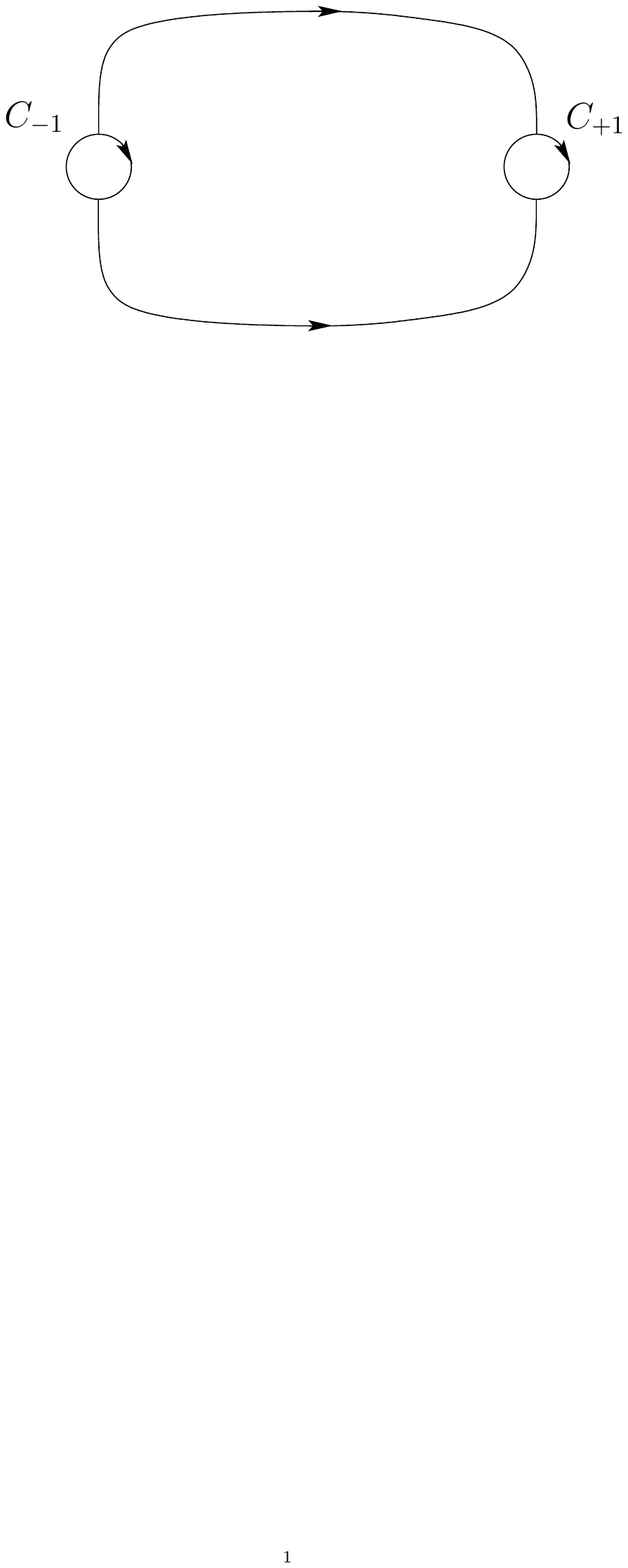}}
\caption{The oriented jump contour $\Sigma_R$ in the ratio RHP \ref{ratioRHP}.}
\label{figure3}
\end{center}
\end{figure}

\begin{problem}\label{ratioRHP} The ratio function $R(\lambda)=R(\lambda;s,\gamma)\in\mathbb{C}^{2\times 2}$ is determined by the following Riemann-Hilbert problem $(\Sigma_R,G_R(\cdot;s,\gamma)):$
\begin{enumerate}
	\item $R(\lambda)$ is analytic for $\lambda\in\mathbb{C}\backslash\Sigma_R$ where the oriented jump contour $\Sigma_R$ is depicted in Figure \ref{figure3}.
	\item The square integrable boundary values $R_{\pm}(\lambda),\lambda\in\Sigma_R$ are related by the jump conditions
	\begin{eqnarray*}
		R_+(\lambda)&=&R_-(\lambda)\left[I+j_{12}(\lambda;s,\gamma)\bigl(\begin{smallmatrix} 0 & 1\\ 0 & 0 \end{smallmatrix}\bigr)\right],\ \ \lambda\in(\Sigma_R\backslash C_{\pm 1})\cap\{\Im\lambda>0\};\\
		R_+(\lambda)&=&R_-(\lambda)\left[I+j_{21}(\lambda;s,\gamma)\bigl(\begin{smallmatrix} 0 & 0\\ 1 & 0 \end{smallmatrix}\bigr)\right],\ \ \lambda\in(\Sigma_R\backslash C_{\pm 1})\cap\{\Im\lambda<0\}
	\end{eqnarray*}
	with
	\begin{equation*}
		j_{12}(\lambda;s,\gamma)=\gamma\e^{2s(\kappa+\im\lambda)}\left(\frac{\lambda-1}{\lambda+1}\right)^{\frac{2\im v}{\pi}},\ \ \ \ \ j_{21}(\lambda)=-\gamma\e^{2s(\kappa-\im\lambda)}\left(\frac{\lambda-1}{\lambda+1}\right)^{-\frac{2\im v}{\pi}}.
	\end{equation*}
	Also, along the clockwise oriented circle boundaries $C_{\pm 1}$,
	\begin{equation*}
		R_+(\lambda)=R_-(\lambda)P^{(1)}(\lambda)\big(P^{(\infty)}(\lambda)\big)^{-1},\ \ \lambda\in C_{+1};\ \ \ \ \ R_+(\lambda)=R_-(\lambda)P^{(-1)}(\lambda)\big(P^{(\infty)}(\lambda)\big)^{-1},\ \ \lambda\in C_{-1}.
	\end{equation*}
	\item As $\lambda\rightarrow\infty$, we have that $R(\lambda)\rightarrow I$.
\end{enumerate}
\end{problem}
On the way to small norm estimations for the jump matrix $G_R(\lambda;s,\gamma),\lambda\in\Sigma_R$ we first note that
\begin{equation*}
	\big|j_{12}(\pm 1+\im r;s,\gamma)\big|\leq \exp\big[2v-v\alpha(r)-2rs\big];\ \ \ \ \ \ \ \big|j_{21}(\pm 1-\im r;s,\gamma)\big|\leq \exp\big[2v-v\alpha(r)-2rs\big]
\end{equation*}
with
\begin{equation*}
	\alpha(r)=1-\frac{2}{\pi}\arctan\left(\frac{r}{2}\right)>1-\frac{2}{\pi}\arctan\left(\frac{1}{8}\right)>0.8.
\end{equation*}
In order to obtain an estimate in the scaling region
\begin{equation*}
	s\geq s_0,\ \ 0\leq v<s^{\frac{1}{3}}
\end{equation*}
we have to choose a contracting radius in \eqref{p:8}. In fact we will work with $r=r(s)=s^{-\frac{1}{3}}$ which leads to
\begin{prop}\label{DZ:1} There exists $s_0>0$ such that
\begin{equation*}
	\|G_R(\cdot;s,\gamma)-I\|_{L^2\cap L^{\infty}(\Sigma_R\backslash C_{\pm 1})}\leq \e^{-s^{\frac{2}{3}}},\ \ \ \ \forall\,s\geq s_0,\ 0\leq v<s^{\frac{1}{3}}.
\end{equation*}
\end{prop}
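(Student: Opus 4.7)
The plan is to derive a uniform pointwise estimate on the jump entries $j_{12}$ and $j_{21}$ on $\Sigma_R\setminus C_{\pm 1}$, then to pass from the $L^\infty$ bound to the stated $L^2\cap L^\infty$ bound via a simple length argument. Using the principal branch of the logarithm, for $\lambda=x+\im y$ in the upper half-plane one has
\begin{equation*}
|j_{12}(\lambda;s,\gamma)|=\gamma\,\e^{2v-2sy}\exp\!\left[-\frac{2v}{\pi}\arg\frac{\lambda-1}{\lambda+1}\right],
\end{equation*}
together with the mirror identity for $|j_{21}|$ in the lower half-plane. Since $\arg\frac{\lambda-1}{\lambda+1}\in(0,\pi)$ for $\Im\lambda>0$ (and lies in $(-\pi,0)$ for $\Im\lambda<0$, with the opposite sign appearing in the exponent of $j_{21}$), the angular factor is bounded by $1$ on each half-plane, and a direct evaluation at the junction points $\lambda=\pm 1+\im r$ recovers the pointwise estimate already quoted in the excerpt, with $\alpha(r)=1-\tfrac{2}{\pi}\arctan(r/2)$.

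Next I would exploit the geometry of the lens: by the construction displayed in Figures~\ref{figure1} and~\ref{figure3}, the contour $\gamma^+$ is chosen so that $\gamma^+\setminus D(\pm 1,r)\subset\{\Im\lambda\ge r\}$, with equality attained only at the two junction points $\pm 1+\im r$. Even discarding the angle factor, this yields
\begin{equation*}
|j_{12}(\lambda;s,\gamma)|\le\e^{2v-2sr},\qquad\lambda\in\gamma^+\setminus D(\pm 1,r),
\end{equation*}
and an identical bound for $|j_{21}|$ on $\gamma^-\setminus D(\pm 1,r)$.

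Inserting the scaling $r=s^{-1/3}$ and $0\le v<s^{1/3}$, the exponent is at most $2s^{1/3}-2s^{2/3}=-2s^{2/3}(1-s^{-1/3})$, which is bounded above by $-\tfrac{3}{2}s^{2/3}$ for all $s\ge s_0$ once $s_0$ is taken large enough. Since $G_R-I$ on these rays is a rank-one matrix of spectral norm equal to $|j_{12}|$ or $|j_{21}|$, we obtain $\|G_R-I\|_{L^\infty(\Sigma_R\setminus C_{\pm 1})}\le\e^{-(3/2)s^{2/3}}$. The $L^2$ estimate then follows by a length argument: the pieces $\gamma^\pm\setminus D(\pm 1,r)$ have uniformly bounded length $L$, hence $\|G_R-I\|_{L^2}^2\le L\,\e^{-3s^{2/3}}$, and after enlarging $s_0$ if necessary to absorb the $\sqrt{L}$ prefactor we arrive at $\|G_R-I\|_{L^2\cap L^\infty(\Sigma_R\setminus C_{\pm 1})}\le\e^{-s^{2/3}}$.

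The only step requiring any real attention is the geometric claim $\gamma^+\setminus D(\pm 1,r)\subset\{\Im\lambda\ge r\}$; this is immediate for a lens chosen to exit the disks $D(\pm 1,r)$ vertically, so that the junctions occur precisely at $\pm 1\pm\im r$, and any such choice is permitted by the opening-of-lens construction, $S_L$ and $S_U$ being entire in the respective half-plane.
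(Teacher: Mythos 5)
Your argument is correct and follows essentially the same route as the paper: the paper likewise reduces the claim to the pointwise bound $|j_{12}(\pm 1+\im r;s,\gamma)|\leq\exp[2v-v\alpha(r)-2rs]$ at the junction points (where $\Im\lambda$ is smallest on the lens lips) and then inserts $r=s^{-1/3}$, $v<s^{1/3}$. You merely spell out the intermediate steps the paper leaves implicit --- the exact modulus formula for $j_{12}$, the sign of $\arg\frac{\lambda-1}{\lambda+1}$ on each half-plane (you may even discard the helpful factor $\e^{-v\alpha(r)}$), the geometric fact $\Im\lambda\geq r$ on $\gamma^{\pm}\setminus D(\pm1,r)$, and the passage from $L^{\infty}$ to $L^2$ by finite contour length --- so no further comment is needed.
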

For $C_{\pm 1}$ we go back to \eqref{p:5} and \eqref{p:7}, first for $\lambda\in C_{+1}$,
\begin{align*}
	\e^{\nu\ln|2s(\lambda+1)|\sigma_3}&\e^{-\frac{\im}{2}\textnormal{arg}(\frac{\Gamma(\nu)}{\Gamma(-\nu)})\sigma_3}\e^{-\im s\sigma_3}\Big(G_R(\lambda;s,\gamma)-I\Big)\e^{\im s\sigma_3}\e^{\frac{\im}{2}\textnormal{arg}(\frac{\Gamma(\nu)}{\Gamma(-\nu)})\sigma_3}\e^{-\nu\ln|2s(\lambda+1)|\sigma_3}\\
	&\sim\sum_{k=1}^{\infty}\begin{pmatrix}\big((-\nu)_k\big)^2 & \nu\big((1+\nu)_{k-1}\big)^2k\,\e^{-2\im\nu\textnormal{arg}(\lambda+1)}\\ -\nu\big((1-\nu)_{k-1}\big)^2k\,\e^{2\im\nu\textnormal{arg}(\lambda+1)} & \big((\nu)_k\big)^2\end{pmatrix}\e^{-\im\frac{\pi}{2}k\sigma_3}\frac{\z^{-k}}{k!}.
\end{align*}
Thus,
\begin{equation*}
	\|G_R(\lambda;s,\gamma)-I\|\leq c\left\{\frac{v}{sr}\left\|\begin{pmatrix} v & \e^{\frac{2v}{\pi}\textnormal{arg}(\lambda+1)}\\ \e^{-\frac{2v}{\pi}\textnormal{arg}(\lambda+1)} & v\end{pmatrix}\right\|+\mathcal{E}(\lambda;s,\gamma)\right\},\ \ \lambda\in C_{+1}
\end{equation*}
and estimations for the error term $\mathcal{E}(\lambda;s,\gamma)$ follow from known error-term estimates for the confluent hypergeometric functions which are, for instance, given in \cite{NIST}: there exists $s_0>0$ and constants $c_j>0$ such that
\begin{equation*}
	\|\mathcal{E}(\lambda;s,\gamma)\|\leq c_1\e^{c_2v-c_3sr}\leq c_1\e^{-c_4s^{\frac{2}{3}}},\ \ \ \ \forall\, s\geq s_0,\ \ 0\leq v<s^{\frac{1}{3}}.
\end{equation*}
Combining this estimate with
\begin{equation*}
	\exp\left[\frac{2v}{\pi}\big|\textnormal{arg}(\lambda+1)\big|\right]\leq\e^{\frac{2v}{\pi}\delta(r)},\ \ \ \delta(r)=\arctan\left(\frac{r}{2}\right),\ \ \lambda\in C_{+1},
\end{equation*}
we obtain (with similar methods for the estimates on $C_{-1}$)
\begin{prop}\label{DZ:2} For every $0<\epsilon<\frac{1}{3}$ there exists $s_0>0$ and $c>0$ such that
\begin{equation*}
	\|G_R(\cdot;s,\gamma)-I\|_{L^2\cap L^{\infty}(C_{\pm 1})}\leq c\,v^2 s^{-\frac{2}{3}},\ \ \ \forall\,s\geq s_0,\ \ 0\leq v<s^{\frac{1}{3}-\epsilon}.
\end{equation*}
\end{prop}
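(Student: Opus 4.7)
\textbf{Proof plan for Proposition~\ref{DZ:2}.}

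The strategy is to combine the pointwise estimate displayed just before the proposition with the specific scaling choice of the radius $r=r(s)$. That estimate reads
\begin{equation*}
\|G_R(\lambda;s,\gamma)-I\|\leq c\left\{\frac{v}{sr}\,\|M(\lambda)\|+\|\mathcal{E}(\lambda;s,\gamma)\|\right\},\qquad \lambda\in C_{+1},
\end{equation*}
where $M(\lambda)$ is the displayed $2\times 2$ matrix with diagonal entries $v$ and off-diagonal entries $\e^{\pm 2v\arg(\lambda+1)/\pi}$, and $\|\mathcal{E}\|\leq c_{1}\e^{c_{2}v-c_{3}sr}$. With the contracting radius $r=s^{-1/3}$ already used in Proposition~\ref{DZ:1}, the hypergeometric error becomes $\|\mathcal{E}\|\leq c_{1}\e^{-c_{4}s^{2/3}}$ throughout the admissible region, hence exponentially negligible against the algebraic bound to be proved.

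The key geometric input on the circle $C_{+1}$ is the elementary bound $|\arg(\lambda+1)|\leq\delta(r)=\arctan(r/2)\leq r/2$. Plugging in $r=s^{-1/3}$ and the constraint $0\leq v<s^{1/3-\epsilon}$ with $\epsilon\in(0,\tfrac{1}{3})$ fixed, I get
\begin{equation*}
\frac{2v}{\pi}\,\delta(r)\leq\frac{v\,s^{-1/3}}{\pi}\leq\frac{s^{-\epsilon}}{\pi}\;\xrightarrow[s\to\infty]{}\;0,
\end{equation*}
so the exponential residual $\e^{\pm 2v\arg(\lambda+1)/\pi}$ is uniformly $\mathcal{O}(1)$ on $C_{+1}$ and in fact differs from $1$ by $\mathcal{O}(vs^{-1/3})$ by Taylor expansion. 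A direct spectral computation for $M(\lambda)$ then yields $\|M(\lambda)\|\leq c(1+v)$; multiplied by the prefactor $v/(sr)=vs^{-2/3}$, this produces the $L^\infty$ bound $\|G_R-I\|_{L^\infty(C_{+1})}\leq cv(1+v)s^{-2/3}$, dominated in the admissible window by $cv^{2}s^{-2/3}$.

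The estimate on $C_{-1}$ follows either by direct analysis using \eqref{p:7} or, more economically, by the symmetry relation \eqref{p:6} applied to the $C_{+1}$ bound. The $L^{2}$ bound is no larger than the $L^{\infty}$ bound because $|C_{\pm 1}|=2\pi r=\mathcal{O}(s^{-1/3})$, so multiplying by $\sqrt{|C_{\pm 1}|}$ only strengthens the estimate. The higher-order terms ($k\geq 2$) in \eqref{p:5} form a geometrically convergent tail dominated by $(v/(sr))^{2}=v^{2}s^{-4/3}$, subdominant to the claimed bound.

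The principal obstacle is to verify that the matrix-norm analysis really supplies the factor $v^{2}$ and not merely $v$. The diagonal of $M(\lambda)$ contributes $v$ directly, so after multiplication by the $v/(sr)$ prefactor one obtains the claimed $v^{2}s^{-2/3}$ from this piece. The off-diagonal contribution, which naively gives only $vs^{-2/3}$, must be shown to enter the final bound via its deviation from $1$ on $C_{+1}$, which is $\mathcal{O}(vs^{-1/3})$; this then supplies the extra power of $v$ through the product $v\cdot vs^{-1/3}/(sr)$. Verifying this cancellation carefully—tracking how the "$1$"-part of the off-diagonal is absorbed in the matching with $P^{(\infty)}$ near $\arg(\lambda+1)=0$—and confirming that the Pochhammer/$\Gamma$-factor structure in \eqref{p:5} does not spoil it is the main technical step, and it is precisely this refinement which justifies the restriction to $v<s^{1/3-\epsilon}$ (rather than the weaker window $v<s^{1/3}$ used in Proposition~\ref{DZ:1}).
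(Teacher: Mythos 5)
Your first three paragraphs reproduce the paper's own argument essentially verbatim: conjugate $G_R-I$ on $C_{+1}$ by the unimodular diagonal factors (harmless since $\nu\in\im\mathbb{R}$), keep the $k=1$ term of \eqref{p:5}, bound the confluent hypergeometric remainder by $c_1\e^{c_2v-c_3sr}\leq c_1\e^{-c_4s^{2/3}}$, control $|\beta^{\mp2}(\lambda)|=\e^{\pm\frac{2v}{\pi}\arg(\lambda+1)}$ through $|\arg(\lambda+1)|\leq\delta(r)=\arctan(r/2)$ with $r=s^{-\frac13}$, treat $C_{-1}$ by the symmetry \eqref{p:6}, and pass from $L^{\infty}$ to $L^{2}$ using $|C_{\pm1}|=2\pi r\rightarrow0$. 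This is exactly the paper's proof, and it yields
\begin{equation*}
\|G_R(\lambda;s,\gamma)-I\|\leq c\,\frac{v}{sr}\Big(v+\e^{\frac{2v}{\pi}\delta(r)}\Big)+c_1\e^{-c_4s^{\frac23}}\leq c\,v(1+v)\,s^{-\frac23},
\end{equation*}
which is all the paper itself establishes before stating the proposition.

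Your fourth paragraph, however, should be discarded: the cancellation you are hunting for does not exist and is not how the factor $v^2$ arises. The off-diagonal entries of $G_R-I$ on $C_{\pm1}$ genuinely have modulus $\frac{v}{\pi}\cdot\frac{1}{2sr}\,\e^{\pm\frac{2v}{\pi}\arg(\lambda+1)}\sim\frac{v}{2\pi}s^{-\frac23}$; the single power of $v$ is the explicit prefactor $\nu$ multiplying $\big((1\pm\nu)_{k-1}\big)^2k$ in \eqref{p:5}, and no part of the ``$1$''-portion of $\e^{\pm\frac{2v}{\pi}\arg(\lambda+1)}$ is absorbed by $P^{(\infty)}$ --- the jump $P^{(1)}\big(P^{(\infty)}\big)^{-1}$ is what it is. The honest output of the argument is $c\,v(1+v)s^{-\frac23}$, i.e.\ $c(v+v^2)s^{-\frac23}$, which coincides with the stated $c\,v^2s^{-\frac23}$ only for $v$ bounded away from zero; what matters downstream is that the jump is small in the stated window and that the $(1,1)$-entry corrections (diagonal entries of $G_R-I$, and products of \emph{pairs} of off-diagonal entries) each carry $\nu^2\sim v^2$. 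Relatedly, the restriction $v<s^{\frac13-\epsilon}$ is not forced by any such refinement: $\e^{\frac{2v}{\pi}\delta(r)}\leq\e^{vs^{-1/3}/\pi}$ is already bounded for all $v<s^{\frac13}$, and the $\epsilon$ is there so that $v^2s^{-\frac23}\rightarrow0$, which is what the small-norm solvability in Proposition \ref{DZ:3} requires. One further small correction: the $k\geq2$ tail is controlled by powers of $v^2/(sr)=v^2s^{-\frac23}$ divided by $k!$, not by $(v/(sr))^2$; it is still subdominant throughout the window, so your conclusion stands.
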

Combining Propositions \ref{DZ:1} and \ref{DZ:2} we obtain by general theory \cite{DZ},
\begin{prop}\label{DZ:3} For every $0<\epsilon<\frac{1}{3}$ there exists $s_0>0$ and $c>0$ such that the ratio RHP \ref{ratioRHP} is uniquely solvable for all $s\geq s_0$ and $0\leq v< s^{\frac{1}{3}-\epsilon}$. The solution can be computed iteratively through the integral equation
\begin{equation*}
	R(\lambda)=I+\frac{1}{2\pi\im}\int_{\Sigma_R}R_-(w)\big(G_R(w)-I\big)\frac{\d w}{w-\lambda},\ \ \lambda\in\mathbb{C}\backslash\Sigma_R,
\end{equation*}
where we use that
\begin{equation*}
	\|R_-(\cdot;s,\gamma)-I\|_{L^2(\Sigma_R)}\leq c\,v^2s^{-\frac{2}{3}},\ \ \forall\,s\geq s_0,\ \ 0\leq v<s^{\frac{1}{3}-\epsilon}.
\end{equation*}
\end{prop}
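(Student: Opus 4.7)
The plan is to invoke the standard Deift--Zhou small-norm Riemann--Hilbert machinery. First, I combine Propositions \ref{DZ:1} and \ref{DZ:2}: on $\Sigma_R\setminus C_{\pm 1}$ the norm $\|G_R-I\|$ is exponentially small, and on $C_{\pm 1}$ it is at most $c\,v^2 s^{-2/3}$. Since the former is absorbed by the latter for $s\geq s_0$, this yields the single global estimate
\begin{equation*}
\|G_R(\cdot;s,\gamma)-I\|_{L^2\cap L^{\infty}(\Sigma_R)}\leq c\,v^2 s^{-\frac{2}{3}},\qquad s\geq s_0,\ \ 0\leq v<s^{\frac{1}{3}-\epsilon}.
\end{equation*}

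Second, I recast RHP \ref{ratioRHP} as a singular integral equation in $L^2(\Sigma_R)$. Writing $C_\pm$ for the Cauchy projections associated with $\Sigma_R$ and $C_w f:=C_-\bigl(f(G_R-I)\bigr)$, the boundary values satisfy $(I-C_w)(R_--I)=C_-(G_R-I)$, and the interior values are recovered by the Cauchy integral in the statement. Because $C_-$ is bounded on $L^2(\Sigma_R)$, the previous step yields the operator norm bound
\begin{equation*}
\|C_w\|_{L^2\to L^2}\leq \|C_-\|_{L^2\to L^2}\,\|G_R-I\|_{L^{\infty}(\Sigma_R)}\leq c\,\|C_-\|\,v^2 s^{-\frac{2}{3}}.
\end{equation*}
For $s_0$ sufficiently large (with $\epsilon>0$ fixed) this is less than $\tfrac{1}{2}$, so $I-C_w$ is invertible on $L^2(\Sigma_R)$ by Neumann series with $\|(I-C_w)^{-1}\|_{L^2\to L^2}\leq 2$. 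This produces the unique $L^2$ solution $R_--I$, hence the unique solution $R(\lambda)$ of the RHP via the Cauchy formula; uniqueness is reinforced by $\det G_R\equiv 1$, which forces $\det R\equiv 1$ so that $R$ has no zeros, and a standard Liouville argument then handles the homogeneous problem.

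Third, I extract the advertised $L^2$ estimate directly from the Neumann inversion:
\begin{equation*}
\|R_--I\|_{L^2(\Sigma_R)}\leq 2\,\|C_-\|_{L^2\to L^2}\,\|G_R-I\|_{L^2(\Sigma_R)}\leq c\,v^2 s^{-\frac{2}{3}}.
\end{equation*}
The only (mild) obstacle is to secure uniformity in $v$ over the full scaling range $0\leq v<s^{\frac{1}{3}-\epsilon}$: this is precisely the role of the strict inequality $\epsilon>0$ in Proposition \ref{DZ:2}, for without it one would only have $\|G_R-I\|_{L^{\infty}}=O(1)$ and the Neumann series would fail to converge. Everything else is routine, since the local parametrices $P^{(\pm 1)}(\lambda)$ have already absorbed the Fisher--Hartwig-type endpoint singularities during the construction preceding RHP \ref{ratioRHP}.
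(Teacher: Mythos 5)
Your proof is correct and is precisely the standard Deift--Zhou small-norm argument that the paper itself invokes; the paper offers no proof beyond the phrase ``by general theory \cite{DZ}''. The singular-integral-equation setup with $C_wf=C_-(f(G_R-I))$, the Neumann-series inversion of $I-C_w$ using the combined estimates of Propositions \ref{DZ:1} and \ref{DZ:2}, and the resulting $L^2$ bound on $R_--I$ are exactly the intended content.
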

This concludes the nonlinear steepest descent analysis of RHP \ref{master}.
\subsection{Proof of expansion \eqref{e:3}}
In order to obtain the statement in \eqref{e:3} we make use of the identity (see, e.g. \cite{BDIK}, equation $(2.1)$),
\begin{equation}\label{p:9}
	\frac{\partial}{\partial s}\ln\det(I-\gamma K_s)=-2\im Y_1^{11}
\end{equation}
where $Y_1$ appeared in RHP \ref{master} and the derivative is taken with $\gamma$ fixed. Tracing back the transformations $Y(\lambda)\mapsto S(\lambda)\mapsto R(\lambda)$, we have 
\begin{equation*}
	Y_1=-\frac{2\im v}{\pi}\sigma_3+\frac{\im}{2\pi}\int_{\Sigma_R}R_-(w)\big(G_R(w)-I\big)\d w
\end{equation*}
and with the help of an explicit residue computation as well as Proposition \ref{DZ:3},
\begin{eqnarray*}
	Y_1^{11}&=&-\frac{2\im v}{\pi}+\frac{\im v^2}{s\pi^2}+\frac{\im}{2\pi}\oint_{C_{\pm 1}}\Big[\big(R_-(w)-I\big)\big(G_R(w)-I\big)\Big]_{11}\d w\\
	&=&-\frac{2\im v}{\pi}+\frac{\im v^2}{s\pi^2}+\frac{\im v^2}{4\pi^2s^2}\sin\big(\phi(s,v)\big)+\mathcal{O}\left(v^3s^{-2}\right),\ \ \phi(s,v)=4s-\frac{4v}{\pi}\ln(4s)+2\textnormal{arg}\,\left(\frac{\Gamma(\nu)}{\Gamma(-\nu)}\right).
\end{eqnarray*}
With this we go back to \eqref{p:9} and perform an indefinite integration with respect to $s$,
\begin{equation}\label{p:10}
	\ln\det(I-\gamma K_s)=-\frac{4v}{\pi}s+\frac{2v^2}{\pi^2}\ln s+C(v)+r(s,v)
\end{equation}
and the error term $r(s,v)$ is differentiable with respect to $s$ and for any $0<\epsilon<\frac{1}{3}$ there exist $s_0>0,c>0$ such that
\begin{equation}\label{p:100}
	\big|r(s,v)|\leq c\frac{v^3}{s},\ \ \ \ \ \forall\,s\geq s_0,\ \ 0\leq v<s^{\frac{1}{3}-\epsilon}.
\end{equation}
The term $C(v)$ appearing in \eqref{p:10} is $s$-independent and can therefore be determined by comparison with \eqref{e:2}, i.e. we have
\begin{equation*}
	C(v)=\frac{2v^2}{\pi^2}\ln 4+2\ln G\left(1+\frac{\im v}{\pi}\right)G\left(1-\frac{\im v}{\pi}\right),
\end{equation*}
and this completes the section.
\begin{rem} Estimate \eqref{p:10}, \eqref{p:100} is weaker than the one stated in Theorem \ref{theo:1}. However, it is enough for the needs of \cite{BDIK}. The full statement of Theorem \ref{theo:1}, that is the extension of \eqref{p:10} to the whole range $0\leq v<s^{\frac{1}{3}}$ with uniform constants $s_0,c_1$ and $c_2$ will be given in the next section with the use of an alternative method based on the recent result \cite{CK}.

\end{rem}
\section{Extending \eqref{e:2} by Toeplitz determinant techniques}\label{sec:3}
Consider the following function on the unit circle,
\begin{equation*}
	f^t(\e^{\im\theta})=\begin{cases}\e^{-2v},&\theta\in[0,t)\cup[2\pi-t,2\pi)\\ 1,&\theta\in[t,2\pi-t)\end{cases};\ \ \ \ 0<t<2\pi,
\end{equation*}
with Fourier coefficients $\{f_k^t\}_{k\in\mathbb{Z}}$ and associated Toeplitz determinant $D_n(f^t)$,
\begin{equation*}
	f_k^t=\frac{1}{2\pi}\int_0^{2\pi}f^t\big(\e^{\im\theta}\big)\e^{-\im k\theta}\d\theta,\hspace{1cm} D_n(f^t)=\det\big[f_{j-k}^t\big]_{j,k=0}^{n-1}.
\end{equation*}
We first observe
\begin{lem}\label{lemT} For any fixed $s>0$,
\begin{equation*}
	\lim_{n\rightarrow\infty}D_n(f^{\frac{2s}{n}})=\det(I-\gamma K_s)\big|_{L^2(-1,1)};\ \ \ \ K_s(\lambda,\mu)=\frac{\sin s(\lambda-\mu)}{\pi(\lambda-\mu)}.
\end{equation*}
\end{lem}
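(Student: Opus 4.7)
The plan is to reinterpret $D_n(f^{2s/n})$ as a Nystr\"om-type discretization of the Fredholm determinant $\det(I-\gamma K_s)$ on $L^2(-1,1)$ and then pass to the limit via the Fredholm series. Because $\e^{-2v}=1-\gamma$, the symbol decomposes as
\begin{equation*}
f^t(\e^{\im\theta})=1-\gamma\chi_A(\theta),\qquad A=[0,t)\cup[2\pi-t,2\pi),
\end{equation*}
and linearity of the Fourier coefficients yields $T_n(f^t)=I_n-\gamma T_n(\chi_A)$, so $D_n(f^t)=\det(I_n-\gamma T_n(\chi_A))$.

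Next, a direct computation exploiting the symmetry of $A$ about $0\bmod 2\pi$ gives
\begin{equation*}
(\chi_A)_k=\frac{1}{2\pi}\int_{-t}^{t}\e^{-\im k\theta}\,\d\theta=\frac{\sin(kt)}{\pi k}\ (k\neq 0),\qquad (\chi_A)_0=\frac{t}{\pi}.
\end{equation*}
Specializing to $t=2s/n$ and introducing the equispaced nodes $\lambda_j=-1+2j/n$, one checks that $(\chi_A)_{j-k}=\tfrac{2}{n}K_s(\lambda_j,\lambda_k)$ for all $j,k$ (the diagonal case uses the removable singularity $K_s(\lambda,\lambda)=s/\pi$). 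Hence
\begin{equation*}
D_n\bigl(f^{2s/n}\bigr)=\det\!\left(I_n-\gamma\Bigl[\tfrac{2}{n}K_s(\lambda_j,\lambda_k)\Bigr]_{j,k=0}^{n-1}\right),
\end{equation*}
which is exactly the Riemann-sum discretization of the Fredholm determinant of $\gamma K_s$ on $L^2(-1,1)$.

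Finally, I would expand both determinants via their Fredholm series. The $m$-th term of the right-hand side above,
\begin{equation*}
\frac{(-\gamma)^m}{m!}\left(\tfrac{2}{n}\right)^m\sum_{i_1,\ldots,i_m=0}^{n-1}\det\bigl[K_s(\lambda_{i_a},\lambda_{i_b})\bigr]_{a,b=1}^m,
\end{equation*}
is a Riemann sum for the corresponding term $\frac{(-\gamma)^m}{m!}\int_{[-1,1]^m}\det[K_s(x_i,x_j)]\,\d x_1\cdots\d x_m$ of $\det(I-\gamma K_s)$ and converges to it for each fixed $m$ by continuity of $K_s$. Hadamard's inequality gives $|\det[K_s(x_i,x_j)]|\leq m^{m/2}\|K_s\|_\infty^m$, which bounds the $m$-th contribution uniformly in $n$ by $(2\gamma\|K_s\|_\infty)^m m^{m/2}/m!$, a summable sequence. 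Dominated convergence in the summation index $m$ then yields $D_n(f^{2s/n})\to\det(I-\gamma K_s)$.

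The only mildly delicate step is this uniform-in-$n$ Hadamard domination, which justifies exchanging the $n\to\infty$ limit with the infinite Fredholm sum; the remainder of the argument is the algebraic identification of the Toeplitz Fourier coefficients with sampled values of $K_s$.
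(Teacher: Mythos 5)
Your proposal is correct and follows essentially the same route as the paper: compute the Fourier coefficients of $f^{2s/n}$, recognize the Toeplitz matrix as $I_n-\gamma$ times the sine kernel sampled at equispaced nodes, and pass to the limit. The only difference is that you spell out the limit passage via the Fredholm series with Hadamard's inequality and dominated convergence, whereas the paper compresses this into an appeal to ``standard properties of trace class operators'' (using translation invariance to place the nodes); your version is a legitimate filling-in of that step, not a different argument.
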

\begin{proof} A straightforward computation of the Fourier coefficents $f_k^{\frac{2s}{n}}$ gives
\begin{equation*}
	f_k^{\frac{2s}{n}}=-\gamma\frac{\sin(k\frac{2s}{n})}{\pi k},\ \ k\neq 0;\ \ \ \ \ \ f_0^{\frac{2s}{n}}=1-\gamma\frac{2s}{n\pi}
\end{equation*}
and therefore
\begin{equation*}
	n f_{j-k}^{\frac{2s}{n}}\rightarrow I(\lambda,\mu)-\gamma K_s(\lambda,\mu),\ \ \ \textnormal{if}\ \ \  \frac{j}{n}\rightarrow\lambda,\ \frac{k}{n}\rightarrow\mu.
\end{equation*}
This, together with the translation invariance of $K_s(\lambda,\mu)$, implies the Lemma by standard properties of trace class operators.
\end{proof}
We will now obtain asymptotics of $D_n(f^t)$ as $n\rightarrow\infty$, for $s>s_0$ with $s_0>0$ sufficiently large and $0\leq v<s^{\frac{1}{3}}$. Note to this end that the function $f^t(z)$ is of Fisher-Hartwig type, see, for example, \cite{DIII}, with two jump-type singularities at $z_1=\e^{\im t},z_2=\e^{\im(2\pi-t)}$ and parameters
\begin{equation*}
	\beta=\beta_1=-\beta_2=\frac{\im v}{\pi}.
\end{equation*}
In more detail, we have
\begin{equation}\label{FishHart}
	f^t(z)=\e^{-\frac{2vt}{\pi}}g_{z_1,\beta_1}(z)g_{z_2,\beta_2}(z)z_1^{-\beta_1}z_2^{-\beta_2},\ \ \ z=\e^{\im\theta}
\end{equation}
where
\begin{equation*}
	g_{z_j,\beta_j}(\e^{\im\theta})=\begin{cases} \e^{\im\pi\beta_j},&\theta\in[0,\textnormal{arg}\,z_j)\\ \e^{-\im\pi\beta_j},&\theta\in[\textnormal{arg}\,z_j,2\pi) \end{cases}.
\end{equation*}
The asymptotics of $D_n(f^t)$ as $n\rightarrow\infty$ for any {\it fixed} $t$ is a classical result going back to the works of Widom, Basor, B\"ottcher and Silbermann \cite{W,B,BS}. As is shown in \cite{CK}, Theorem $1.11$, this result still holds in the case of $\frac{2s}{n}<t<t_0$ for sufficiently small $t_0$ with adjusted error estimate:
\begin{equation}\label{t:0}
	\ln D_n(f^t)=-\frac{4v}{\pi}s+\frac{2v^2}{\pi^2}\ln\left(2n\sin\frac{2s}{n}\right)+2\ln G\left(1+\frac{\im v}{\pi}\right)G\left(1-\frac{\im v}{\pi}\right)+r(s,v,n),
\end{equation}
and there exists $n_0(v),s_0(v),C_0(v)>0$ such that\footnote{note that $\beta_j$ in \cite{CK}, and hence $v$, are fixed.}
\begin{equation*}
	\forall\, n>n_0,\ s>s_0,\ v\geq 0\ \ \textnormal{fixed}:\ \  \ \ \big|r(s,v,n)\big|<C_0(v)s^{-1}.
\end{equation*}
We will now extend the argument of \cite{CK} to the case of $0\leq v<s^{\frac{1}{3}}$ and see that the error term remains small (for large $s$) in this region of the $(v,s)$-plane. In order to carry out this approach we have to track the dependence of the error term \cite{CK}$(7.58)$, the error term \cite{CK}$(8.20)$ at $z_1$ and a similar term at $z_2$ on $\beta=\frac{\im v}{\pi}$.\smallskip

First, we obtain by a straightforward calculation that the crucial constants $\widehat{E}_j(z_j)\e^{-\im\pi\beta_j\sigma_3}$ in \cite{CK}$(7.47),(7.52)$ are bounded in $n,t>0$ uniformly for any purely imaginary $\beta_j$ and $\frac{2s}{n}<t<t_0$.\smallskip

Second, we consider the auxiliary $M$-RHP of Section $4$ in \cite{CK}:
\begin{problem}[see \cite{CK}, Section 4, or \cite{CIK}, Section $4.2.1$] Determine $M(\z)=M(\z;\beta)\in\mathbb{C}^{2\times 2}$ such that
\begin{enumerate}
	\item $M(\z)$ is analytic for $\z\in\mathbb{C}\backslash\big(\e^{\pm\im\frac{\pi}{4}}\mathbb{R}\cup[0,\infty)\big)$. We orient the five jump rays as shown in Figure \ref{figure4} below.
	\item The boundary values $M_{\pm}(\z)$ on $\e^{\pm\im\frac{\pi}{4}}\mathbb{R}\cup(0,\infty)$ are continuous and related by the following jump conditions:
	\begin{equation*}
		M_+(\z)=M_-(\z)\bigl(\begin{smallmatrix} 1 & \e^{-\im\pi\beta}\\ 0 & 1\end{smallmatrix}\bigr),\ \ \z\in\e^{\im\frac{\pi}{4}}(0,\infty);\ \ \ \ M_+(\z)=M_-(\z)\bigl(\begin{smallmatrix} 1 & 0\\ -\e^{\im\pi\beta} & 1\end{smallmatrix}\bigr),\ \ \z\in\e^{\im\frac{3\pi}{4}}(0,\infty)
	\end{equation*}
	\begin{equation*}
		M_+(\z)=M_-(\z)\bigl(\begin{smallmatrix} 1 & 0\\ \e^{-\im\pi\beta} & 1\end{smallmatrix}\bigr),\ \ \z\in\e^{\im\frac{5\pi}{4}}(0,\infty);\ \ \ \ M_+(\z)=M_-(\z)\bigl(\begin{smallmatrix} 1 & -\e^{\im\pi\beta}\\ 0 & 1\end{smallmatrix}\bigr),\ \ \z\in\e^{\im\frac{7\pi}{4}}(0,\infty)
	\end{equation*}
	and
	\begin{equation*}
		M_+(\z)=M_-(\z)\e^{2\pi\im\beta\sigma_3},\ \ \ \z\in(0,\infty).
	\end{equation*}
	\item As $\z\rightarrow\infty$, valid in a full vicinity of $\z=\infty$,
	\begin{equation*}
		M(\z)=\Big(I+\frac{M_1}{\z}+\mathcal{O}\left(\z^{-2}\right)\Big)\z^{-\beta\sigma_3}\e^{-\frac{1}{2}\z\sigma_3},\ \ \ \ \textnormal{arg}\,\z\in(0,2\pi).
	\end{equation*}
	with
	\begin{equation*}
		M_1=\begin{pmatrix}-\beta^2& -\e^{-2\pi\im\beta}\frac{\Gamma(1-\beta)}{\Gamma(\beta)}\\ \e^{2\pi\im\beta}\frac{\Gamma(1+\beta)}{\Gamma(-\beta)} & \beta^2\end{pmatrix}.
	\end{equation*}
\end{enumerate}
\end{problem}
\begin{figure}[tbh]
\begin{center}
\resizebox{0.4\textwidth}{!}{\includegraphics{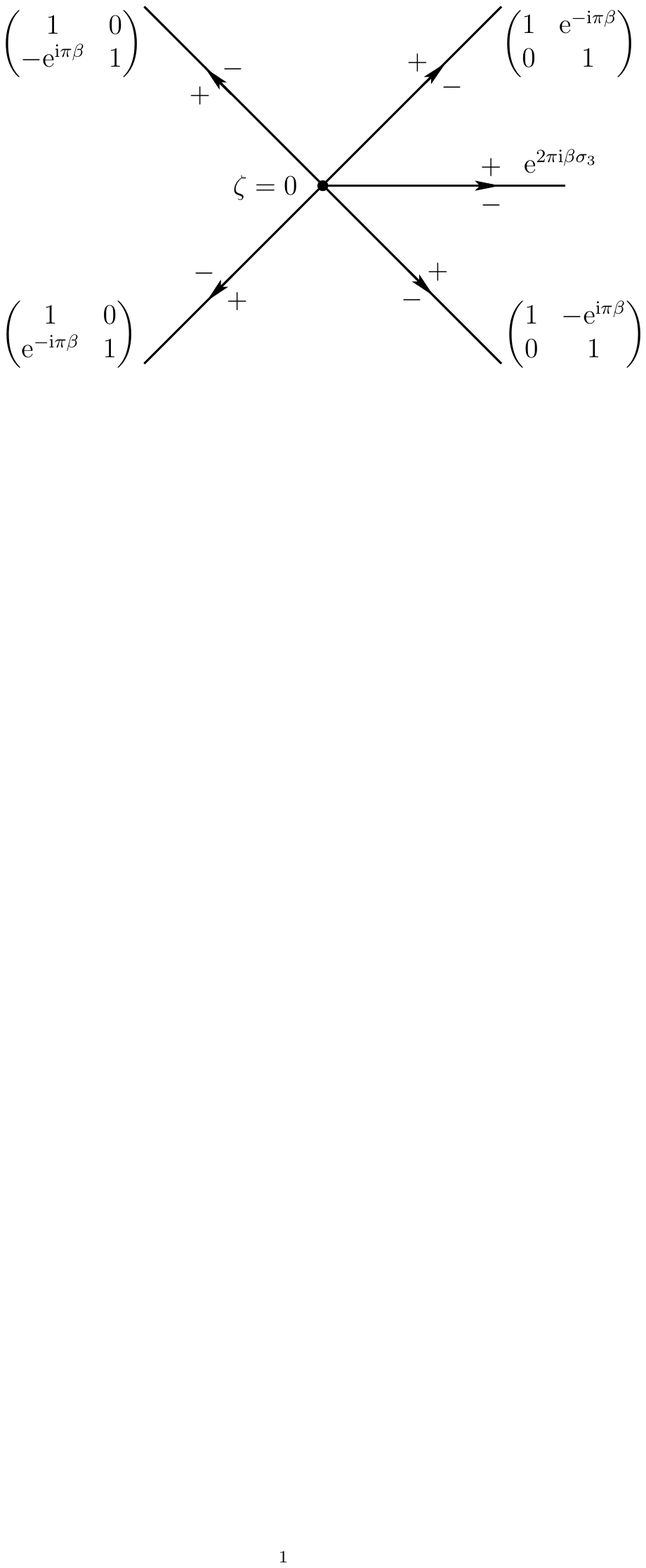}}
\caption{Jump behavior of $M(\z)$ in the complex $\z$-plane.}
\label{figure4}
\end{center}
\end{figure}
This problem is solved explicitly in terms of the confluent hypergeometric function $U(a;\z)=U(a,1;\z)$, compare RHP \ref{bareconf} and \eqref{p:2} above. We define
\begin{equation*}
	M(z)=\begin{pmatrix} U(\beta,z) & U(1-\beta,\e^{-\im\pi}z)\e^{-\im\pi\beta}\frac{\Gamma(1-\beta)}{\Gamma(\beta)} \\ U(1+\beta,z)\e^{2\pi\im\beta}\frac{\Gamma(1+\beta)}{\Gamma(-\beta)} & U(-\beta,\e^{-\im\pi}z)\e^{\im\pi\beta}\end{pmatrix}\e^{-\frac{1}{2}z\sigma_3}\mathcal{M}(z),\ \ \ \textnormal{arg}\,z\in(0,2\pi)
\end{equation*}
with
\begin{equation*}
	\mathcal{M}(z)=\begin{cases}\bigl(\begin{smallmatrix}1&-\e^{-\im\pi\beta}\\ 0 & 1\end{smallmatrix}\bigr),&\textnormal{arg}\,z\in(0,\frac{\pi}{4})\\ I,&\textnormal{arg}\,z\in(\frac{\pi}{4},\frac{3\pi}{4})\\ \bigl(\begin{smallmatrix} 1 & 0\\ -\e^{\im\pi\beta} & 1\end{smallmatrix}\bigr),&\textnormal{arg}\,z\in(\frac{3\pi}{4},\frac{5\pi}{4})\end{cases},\ \ \ \mathcal{M}(z)=\begin{cases} \bigl(\begin{smallmatrix} 1 & 0\\ -2\im\sin\pi\beta & 1\end{smallmatrix}\bigr),&\textnormal{arg}\,z\in(\frac{5\pi}{4},\frac{7\pi}{4})\\ \bigl(\begin{smallmatrix} 1 & -\e^{\im\pi\beta}\\ -2\im\sin\pi\beta & \e^{2\pi\im\beta}\end{smallmatrix}\bigr),&\textnormal{arg}\,z\in(\frac{7\pi}{4},2\pi)\end{cases}
\end{equation*}
From the standard asymptotics of the confluent hypergeometric function (cf. \cite{NIST}), we conclude that for purely imaginary $\beta$,
\begin{equation}\label{t:1}
	M(z)=\left[I+\e^{-\im\pi\beta\sigma_3}\left\{\mathcal{O}\left(\frac{\beta}{z}\right)+\mathcal{O}\left(\frac{\beta^2}{z}\right)\right\}\e^{-\im\pi\beta\sigma_3}\right]z^{-\beta\sigma_3}\e^{-\frac{z}{2}\sigma_3},\ \ \ z\rightarrow\infty.
\end{equation}
Here we used the identity (cf. \cite{NIST})
\begin{equation*}
	\left|\frac{\Gamma(1-\beta)}{\Gamma(\beta)}\right|=|\beta|,\ \ \ \beta\in\im\mathbb{R}.
\end{equation*}
Now we are ready to estimate the jumps in the final $\Upsilon$-RHP in the region $\frac{2s}{n}<t<t_0$, see \cite{CK}$(7.55)$. The function $\Upsilon(\lambda)$ is the final ratio function for the underlying nonlinear steepest descent analysis in \cite{CK}. We will not explicitly define $\Upsilon(z)$ at this point as it would involve several other model functions which we do not need for our purposes. Instead we refer to \cite{CK},(7.55), (7.56) and (7.57) for details. The jump contour $\Sigma_{\Upsilon}$ in the $\Upsilon$-RHP consists of two contracting circles of radii $\mathcal{O}(t)$ and the lens boundary around the larger arc outside the circles. The jump on the outer lens lip equals identity plus a constant matrix multiplied by $z^{-n}(\mathcal{D}_{\textnormal{in},t}(z)\mathcal{D}_{\textnormal{out},t}(z))^{-1}$, defined in \cite{CK}$(7.3),(7.4)$. Subject to the condition that, say, $v^2<s,n>s>s_0,\frac{2s}{n}<t<t_0$, we obtain the estimate
\begin{equation*}
	\Big|z^{-n}\big(\mathcal{D}_{\textnormal{in},t}(z)\mathcal{D}_{\textnormal{out},t}(z)\big)^{-1}\Big|<c_0\e^{-\epsilon nt}\left|\left(\frac{z-z_1}{z-z_2}\right)^{-2\beta}\left(\frac{z_1}{z_2}\right)^{\beta}\right|<c_0\e^{-\epsilon nt+C_0v}<c_0\e^{-\epsilon_1nt}
\end{equation*}
for some $c_0,C_0,\epsilon,\epsilon_1>0$ on the outer lip of the lens. Thus the jump is exponentially (in $s$) close to the identity. The corresponding estimate on the inner lens lip is similar. For the jump on the circle centered at $z_1$ we use \eqref{t:1} and the properties of $\widehat{E}_1(z)$. This leads us to an estimate of the form
\begin{equation*}
	I+\mathcal{O}\left(\frac{\beta}{nt}\right)+\mathcal{O}\left(\frac{\beta^2}{nt}\right)
\end{equation*}
and a similar one also holds on the circle centered at $z_2$. From standard theory \cite{DZ} it follows now that the problem for $\Upsilon(z)$ is solvable and we obtain the estimates (which replace \cite{CK}$(7.58)$),
\begin{equation}\label{t:2}
	\Upsilon(z)=I+\mathcal{O}\left(\frac{\beta}{nt}\right)+\mathcal{O}\left(\frac{\beta^2}{nt}\right),\ \ \ \frac{\d}{\d z}\Upsilon(z)=\frac{1}{t}\Big\{\mathcal{O}\left(\frac{\beta}{nt}\right)+\mathcal{O}\left(\frac{\beta^2}{nt}\right)\Big\}
\end{equation}
as $s,n\rightarrow\infty,n>s>s_0,v^2<s$, uniformly for $z$ off the jump contour $\Sigma_{\Upsilon}$ and uniformly in $s,v$ such that $\frac{2s}{n}<t<t_0$. Using \eqref{t:2} we can then estimate the error term in \cite{CK}$(8.20)$.
\begin{rem} For simplicity of the derivation, it is assumed in \cite{CK}$(8.20)$ that there also exist $\alpha>0$ Fisher-Hartwig singularities at the same points: 
\begin{equation*}
	f^t(z)=\e^{-\frac{2vt}{\pi}}\prod_{j=1}^2|z-z_j|^{2\alpha}g_{z_j,\beta_j}(z)z_j^{-\beta_j},
\end{equation*}
as opposed to \eqref{FishHart}. In the end of the derivation, to obtain the actual error term in the differential identity for the determinant $D_n(f^t)$, we have to multiply \cite{CK}$(8.20)$ by $\alpha$ and take the limit $\alpha\rightarrow 0$.
\end{rem}
By a straightforward computation we obtain
\begin{equation*}
	\frac{1}{t}\Big\{\mathcal{O}\left(\frac{\beta}{nt}\right)+\mathcal{O}\left(\frac{\beta^3}{nt}\right)\Big\}
\end{equation*}
as error term in the differential identity for $D_n(f^t)$, uniformly for $\frac{2s}{n}<t<t_0$, the rest of the identity for $D_n(f^t)$ is the same as in \cite{CK}. Integrating this identity over $\frac{2s}{n}<t<t_0$ for some fixed $t_0$ and using the well-known large $n$ asymptotics for the determinant with two fixed Fisher-Hartwig singularities at $z_1=\e^{\im t_0},z_2=\e^{\im(2\pi-t_0)}$ (see, for instance, \cite{CK}$(1.8)$), we derive \eqref{t:0}, where now, for some $s_0,C_1,C_2>0$,
\begin{equation*}
	n>s>s_0,\ \ \ \ \ 0\leq v<s^{\frac{1}{3}},\ \ \ \ \ \big|r(s,v,n)\big|<C_1\frac{v}{s}+C_2\frac{v^3}{s}.
\end{equation*}
At this point we take the limit $n\rightarrow\infty$, use Lemma \ref{lemT} and obtain Theorem \ref{theo:1}.
\section{More on the integral term -- proof of Proposition \ref{prop:int}}\label{sec:4}
We first analyze all objects involved in the formul\ae\, for $M(x,\kappa)$ in the limit $\kappa\downarrow 0$. Observe that the defining equation \eqref{e:0} for the branch point $a=a(\kappa)$ can be written as
\begin{equation}\label{m:1}
	F(a,\kappa)\equiv\kappa+\frac{\im}{2}\oint_{\Sigma}\left(\frac{\mu^2-a^2}{\mu^2-1}\right)^{\frac{1}{2}}\d\mu=0
\end{equation}
where $\Sigma\subset\mathbb{C}$ is a simple counterclockwise oriented Jordan curve around the interval $[a,1]$ and we fix
\begin{equation*}
	-\pi<\textnormal{arg}\left(\frac{\mu^2-a^2}{\mu^2-1}\right)\leq\pi\ \ \ \textnormal{with}\ \ \sqrt{\frac{\mu^2-a^2}{\mu^2-1}}>0\ \ \textnormal{for}\ \ \mu>1.
\end{equation*} 
Since $F(a,\kappa)$ is analytic at $(a,\kappa)=(1,0)$ and $F_a(1,0)=\frac{\pi}{2}>0$ we have unique solvability of equation \eqref{m:1} in a neighborhood of $(a,\kappa)=(1,0)$ and the solution $a=a(\kappa)$ is analytic at $\kappa=0$, i.e. \cite{BDIK}$(2.5)$ extends to a full Taylor series
\begin{equation}\label{m:2}
	a(\kappa)=1-\frac{2\kappa}{\pi}-\frac{\kappa^2}{\pi^2}+\sum_{j=3}^{\infty}a_j\kappa^j,\ \ \ \kappa\rightarrow 0.
\end{equation}
Next we extend the small $\kappa$ expansions listed in \cite{BDIK}$(2.10),(2.11)$ for the frequency $V=V(\kappa)$ and nome $\tau=\tau(\kappa)$. Both, compare \eqref{e:-1}, are expressed in terms of complete elliptic integrals and thus hypergeometric functions, cf. \cite{NIST},
\begin{equation*}
	K(a)=\frac{\pi}{2}\, {_2}F_1\left(\frac{1}{2},\frac{1}{2};1;a^2\right),\ \ \ \ \ E(a)=\frac{\pi}{2}\,{_2}F_1\left(-\frac{1}{2},\frac{1}{2};1;a^2\right).
\end{equation*}
Hence, using expansion of the hypergeometric functions at unity (cf. \cite{NIST}) and combining these with \eqref{m:2} we obtain
\begin{equation}\label{m:3}
	V(\kappa)=-\frac{2}{\pi}\left(1+\ln\kappa\sum_{j=1}^{\infty}u_j\kappa^j+\sum_{j=1}^{\infty}v_j\kappa^j\right),\ \ \kappa\downarrow 0;\ \ \ \ \ u_1=\frac{1}{\pi},\ \ v_1=-\frac{1}{\pi}(1+\ln 4\pi);
\end{equation}
as well as
\begin{equation}\label{m:4}
	\tau(\kappa)=-\frac{2\im}{\pi}\ln\kappa+\ln\kappa\sum_{j=1}^{\infty}b_j\kappa^j+\sum_{j=0}^{\infty}c_j\kappa^j,\ \ \kappa\downarrow 0;\ \ \ \ \ c_0=\frac{2\im}{\pi}\ln 4\pi.
\end{equation}
The series in the right hand sides of \eqref{m:3} and \eqref{m:4} are convergent series.
\begin{rem}\label{normconst} Another parameter which appears in \cite{BDIK}$(1.16)$ and which serves as normalization for the elliptic nome $\tau=\tau(\kappa)$ is given by
\begin{equation*}
	c=c(\kappa)=\frac{\im}{2}\frac{1}{K(a')},\ \ \ \ a'=\sqrt{1-a^2}.
\end{equation*}
Hence analyticity of $a(\kappa)$ at $\kappa=0$ implies at once analyticity of $c(\kappa)$ at the same point and \cite{BDIK}, Corollary $2.2$ extends to a full Taylor series
\begin{equation*}
	c(\kappa)=\frac{\im}{\pi}\left(1-\frac{\kappa}{\pi}+\sum_{j=2}^{\infty}d_j\kappa^j\right),\ \ \ \kappa\rightarrow 0.
\end{equation*}
\end{rem}
At this point we recall \cite{BDIK}, Proposition $4.2$: It was shown that $M(x,\kappa)$, given by \cite{BDIK}$(1.24)$ (see also \eqref{Mid}) and defined for $x\in\mathbb{R},\kappa\in(0,1-\delta]$ is smooth in both its arguments and one-periodic in the first,
\begin{equation*}
	M(x+1,\kappa)=M(x,\kappa).
\end{equation*}
Hence we can expand $M(x,\kappa)$ in a Fourier series
\begin{equation}\label{m:5}
	M(x,\kappa)=\sum_{n\in\mathbb{Z}}a_n(\kappa)\e^{2\pi\im nx};\ \ a_n(\kappa)=\frac{1}{2\pi}\int_0^1M(x,\kappa)\e^{-2\pi\im nx}\d x
\end{equation}
and the series converges uniformly in $x\in\mathbb{R},\kappa\in(0,1-\delta]$ to $M(x,\kappa)$. Note that for $n\neq 0$,
\begin{equation*}
	a_n(\kappa)=-\frac{1}{8\pi^2n^2}\int_0^1\left[\frac{\partial^2}{\partial x^2}M(x,\kappa)\right]\e^{-2\pi\im nx}\d x;\ \ \ \frac{\d}{\d\kappa}a_n(\kappa)=-\frac{1}{8\pi^3n^2}\int_0^1\left[\frac{\partial^3}{\partial x^2\partial\kappa}M(x,\kappa)\right]\e^{-2\pi\im nx}\d x.
\end{equation*}
In the derivation of Proposition \ref{prop:int} we will use the series representation \eqref{m:5}, i.e. we need estimates for $a_n(\kappa)$ and its derivative.\smallskip

The building blocks of $M(x,\kappa)$ are summarized in \cite{BDIK}$(4.10),(4.11),(4.12)$ (see also \eqref{Mid}, \eqref{Mid2} and \eqref{Mid3}) and these are all functions involving the third Jacobi theta function
\begin{equation*}
	\theta_3\big(x|\tau(\kappa)\big)=\sum_{m\in\mathbb{Z}}\e^{\im\pi\tau m^2+2\pi\im mx}.
\end{equation*}
The roots of this function are located at $\frac{1}{2}+\frac{\tau}{2}+\mathbb{Z}+\tau\mathbb{Z}$ and from \eqref{m:4} we see that uniformly in $x\in\mathbb{R}$,
\begin{equation*}
	\theta_3(x|\tau)=1+\mathcal{O}\left(\kappa^2\right).
\end{equation*}
Hence, for any $m,\ell\in\mathbb{Z}_{\geq 0}$ there exist $C_{m,\ell},C_m>0$\footnote{In what follows, $C$ with or without indices denotes a positive constant independent of $x$ and $\kappa$ whose value may be different in different estimates.} such that
\begin{equation}\label{m:6}
	\left|\frac{\partial^{m+\ell}}{\partial x^m\partial\tau^{\ell}}\theta_3(x|\tau)-\delta_{m0}\delta_{\ell0}\right|\leq C_{m,\ell}\kappa^2;\ \ \ \ \ \ \ \ \ \left|\frac{\partial^{m+1}}{\partial x^m\partial\kappa}\theta_3\big(x|\tau(\kappa)\big)\right|\leq C_m\kappa,
\end{equation}
for all $x\in\mathbb{R},\kappa\in(0,1-\delta]$ and in particular
\begin{equation}\label{m:7}
	\left|\frac{\partial^m}{\partial x^m}\left[\frac{\im}{4\pi}\frac{\kappa}{\theta_3(x|\tau)}\frac{\partial^2}{\partial y^2}\theta_3(y|\tau)\Big|_{y=x}\frac{\d\tau}{\d\kappa}\right]\right|\leq C_m\kappa^2,\ \ \ \ \ \left|\frac{\partial^{m+1}}{\partial x^m\partial\kappa}\left[\frac{\im}{4\pi}\frac{\kappa}{\theta_3(x|\tau)}\frac{\partial^2}{\partial y^2}\theta_3(y|\tau)\Big|_{y=x}\frac{\d\tau}{\d\kappa}\right]\right|\leq C_m\kappa.
\end{equation}
The last two estimates are needed for the second summand in \cite{BDIK}$(1.24)$ (compare \eqref{Mid}). For the first term in loc. cit. we collect from \eqref{m:2}, the uniform bounds which are valid for all $x\in\mathbb{R},\kappa\in(0,1-\delta]$
\begin{equation}\label{m:8}
	C_{1,m}\leq\left|\frac{\d^m}{\d\kappa^m}a(\kappa)\right|\leq C_{2,m},\ \ \ \ \ C_{1,m}\leq\left|\frac{\d^m}{\d\kappa^m}\frac{1}{a(\kappa)(1+a(\kappa))}\right|\leq C_{2,m}.
\end{equation}
Next we analyze $\Xi_0(x,\kappa),\Xi_2(x,\kappa)$ and $\Theta_0(x,\kappa)$: The functions $\Xi_j(x,\kappa)$, see \cite{BDIK}$(4.10)$ and \eqref{Mid2}, are combinations of $\theta_j(x|\tau)$ and thus, following the logic which lead to \eqref{m:6}, we obtain similarly
\begin{equation*}
	\left|\frac{\partial^{m+\ell}}{\partial x^m\partial\tau^{\ell}}\theta_0(x\pm d|\tau)-\delta_{m0}\delta_{\ell 0}\right|\leq C_{m,\ell}\kappa;\ \ \ \ \ \ \left|\frac{\partial^{m+1}}{\partial x^m\partial\kappa}\theta_0(x\pm d|\tau)\right|\leq C_m
\end{equation*}
for all $x\in\mathbb{R},\kappa\in(0,1-\delta]$. The same estimates also hold for $\theta_2(x\pm d|\tau)$ which yield
\begin{equation}\label{m:9}
	\big|\Xi_j(x,\kappa)\big|\leq C,\ \ \ \ \left|\frac{\partial^m}{\partial x^m}\Xi_j(x,\kappa)\right|\leq C_m\kappa,\ \ \ \ \ \left|\frac{\partial^{m+1}}{\partial x^m\partial\kappa}\Xi_j(x,\kappa)\right|\leq C_m,\ \ \ \ j=0,2,
\end{equation}
valid for all $x\in\mathbb{R},\kappa\in(0,1-\delta]$. Since the remaining function $\Theta_0(x,\kappa)$ satisfies estimates of the same type as \eqref{m:9}, we can combine \eqref{m:7}, \eqref{m:8} and \eqref{m:9} to derive
\begin{prop}\label{p42} Given $m\in\mathbb{Z}_{\geq 1}$ there exist positive $C,C_m$ such that $M(x,\kappa)$ given in \cite{BDIK}$(1.24)$ satisfies
\begin{equation*}
	\big|M(x,\kappa)\big|\leq C,\ \ \ \ \ \left|\frac{\partial^m}{\partial x^m}M(x,\kappa)\right|\leq C_m\kappa,\ \ \ \ \ \ \ \left|\frac{\partial^{m+1}}{\partial x^m\partial\kappa}M(x,\kappa)\right|\leq C_m
\end{equation*}
for all $x\in\mathbb{R},\kappa\in(0,1-\delta]$.
\end{prop}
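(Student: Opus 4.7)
The plan is to exploit the explicit representation of $M(x,\kappa)$ given by \cite{BDIK}$(1.24)$, reproduced as \eqref{Mid} in Appendix \ref{appA}. This representation splits $M$ into a sum of two pieces: a ``combinatorial'' piece built out of the prefactor $\frac{1}{a(\kappa)(1+a(\kappa))}$ multiplied by a fixed polynomial combination of the functions $\Xi_0(x,\kappa),\Xi_2(x,\kappa)$ and $\Theta_0(x,\kappa)$, and a ``correction'' piece of the form $\frac{\im\kappa}{4\pi\theta_3(x|\tau)}\partial_y^2\theta_3(y|\tau)|_{y=x}\frac{\d\tau}{\d\kappa}$. All three required estimates will follow by applying the Leibniz rule to this decomposition and inserting the componentwise bounds \eqref{m:7}, \eqref{m:8}, \eqref{m:9}, together with the analogous bounds for $\Theta_0(x,\kappa)$ already noted in the paragraph preceding the proposition.

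First I would handle $|M(x,\kappa)|\leq C$. The prefactor $\frac{1}{a(1+a)}$ is uniformly bounded on $(0,1-\delta]$ by \eqref{m:8}; each of $\Xi_j(x,\kappa)$ and $\Theta_0(x,\kappa)$ is bounded by $C$ according to \eqref{m:9} and its $\Theta_0$-analogue; and the correction piece is $\mathcal{O}(\kappa^2)$ by the first inequality of \eqref{m:7}. Summing over the finitely many terms in \eqref{Mid} yields the uniform bound.

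Next, for the $x$-derivative estimate $|\partial_x^m M|\leq C_m\kappa$, I would note that the prefactor $\frac{1}{a(1+a)}$ is $x$-independent, so all $x$-derivatives are distributed, via the Leibniz rule, among the factors $\Xi_j,\Theta_0$ (whose $x$-derivatives of order $\geq 1$ are $\mathcal{O}(\kappa)$ by \eqref{m:9}) and the correction piece (whose $\partial_x^m$ is $\mathcal{O}(\kappa^2)$ by \eqref{m:7}). Since $m\geq 1$, at least one derivative must fall somewhere; in the combinatorial piece it lands on a $\Xi_j$ or $\Theta_0$, producing a factor of $\kappa$ while the remaining factors stay bounded. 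The correction piece is controlled by the sharper $\mathcal{O}(\kappa^2)$ bound, which is stronger than the claim.

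Finally, for the mixed derivative $|\partial_x^m\partial_\kappa M|\leq C_m$, I apply the Leibniz rule with respect to both $x$ and $\kappa$. In the combinatorial piece two cases arise: either the single $\kappa$-derivative falls on the prefactor $\frac{1}{a(1+a)}$, in which case the remaining factor is a pure $\partial_x^m$ derivative of a product of $\Xi_j$'s and $\Theta_0$, which is $\mathcal{O}(\kappa)$ by the argument of the previous paragraph, or it falls on one $\Xi_j$ or $\Theta_0$, converting that factor's estimate from $C_m\kappa$ to $C_m$ via the third bound in \eqref{m:9} while the remaining $\Xi_j,\Theta_0$ factors (on which only $x$-derivatives act) contribute $\mathcal{O}(1)$ in total. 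Both cases yield a uniform bound. For the correction piece the second inequality in \eqref{m:7} gives $|\partial_x^m\partial_\kappa(\cdots)|\leq C_m\kappa$, which is stronger than required.

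The main obstacle is purely bookkeeping: making sure every term produced by the Leibniz expansion in the final estimate still carries a uniformly bounded constant, with the compensation between lost factors of $\kappa$ (when $\partial_\kappa$ hits $\Xi_j$ or $\Theta_0$) and preserved factors of $\kappa$ (coming from $\partial_x$ on the other theta-type factors) correctly accounted for. No analytic subtlety is expected beyond the componentwise estimates already established in \eqref{m:7}--\eqref{m:9}.
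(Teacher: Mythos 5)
Your proposal is correct and follows essentially the same route as the paper: the paper likewise derives Proposition \ref{p42} by combining the componentwise estimates \eqref{m:7}, \eqref{m:8}, \eqref{m:9} (and the stated $\Theta_0$-analogue of \eqref{m:9}) through the explicit decomposition \eqref{Mid}, leaving the Leibniz bookkeeping implicit. Your write-up simply makes that bookkeeping explicit, including the correct compensation between the factor of $\kappa$ gained from an $x$-derivative on a theta-type factor and the factor lost when $\partial_\kappa$ lands on one of them.
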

This Proposition allows us to estimate the Fourier coefficients in \eqref{m:5}
\begin{cor}\label{Fcoeff} For all $n\in\mathbb{Z},\kappa\in(0,1-\delta]$,
\begin{equation*}
	\big|a_n(\kappa)\big|\leq\frac{C}{1+n^2},\ \ \ \ \ \left|\frac{\d}{\d\kappa}a_n(\kappa)\right|\leq\frac{C}{1+n^2},\ \ \ \ C>0.
\end{equation*}
\end{cor}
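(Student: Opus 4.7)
My plan is to deduce the corollary directly from Proposition \ref{p42} combined with the integration-by-parts representations for $a_n(\kappa)$ that are already recorded just above that proposition. Since $M(\cdot,\kappa)$ is smooth and $1$-periodic in $x$ (with all $x$-derivatives also $1$-periodic), integration by parts in the defining Fourier integral produces no boundary terms, so for $n\neq 0$ we have
\begin{equation*}
	a_n(\kappa)=-\frac{1}{8\pi^3 n^2}\int_0^1\left[\frac{\partial^2}{\partial x^2}M(x,\kappa)\right]\e^{-2\pi\im nx}\,\d x,
\end{equation*}
and an analogous identity with the integrand replaced by $\frac{\partial^3}{\partial x^2\partial\kappa}M(x,\kappa)$ for $\frac{\d}{\d\kappa}a_n(\kappa)$. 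The exchange of $\frac{\d}{\d\kappa}$ and the integral is justified by the fact that $\partial_\kappa M$ and its mixed derivatives are bounded uniformly in $(x,\kappa)\in\mathbb{R}\times(0,1-\delta]$, which is the same reasoning that underlies Proposition \ref{p42} (the case $m=0$ of the estimates \eqref{m:6}, \eqref{m:7}, \eqref{m:9} carries over verbatim).

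First I handle the case $n\neq 0$. Applying Proposition \ref{p42} with $m=2$ gives
\begin{equation*}
	\left|\frac{\partial^2}{\partial x^2}M(x,\kappa)\right|\leq C_2\kappa\leq C_2,\qquad \left|\frac{\partial^3}{\partial x^2\partial\kappa}M(x,\kappa)\right|\leq C_2,
\end{equation*}
uniformly for $(x,\kappa)\in\mathbb{R}\times(0,1-\delta]$. Inserting these estimates into the two integral representations above and using $|\e^{-2\pi\im nx}|=1$ immediately yields
\begin{equation*}
	\big|a_n(\kappa)\big|\leq\frac{C_2}{8\pi^3 n^2},\qquad \left|\frac{\d}{\d\kappa}a_n(\kappa)\right|\leq\frac{C_2}{8\pi^3 n^2},\qquad n\neq 0.
\end{equation*}

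For the remaining case $n=0$, I use the crude estimates straight from the definition: $|a_0(\kappa)|\leq\frac{1}{2\pi}\int_0^1|M(x,\kappa)|\,\d x\leq \frac{C}{2\pi}$ by the uniform bound $|M|\leq C$ of Proposition \ref{p42}, and analogously $|\frac{\d}{\d\kappa}a_0(\kappa)|\leq\frac{1}{2\pi}\int_0^1|\partial_\kappa M(x,\kappa)|\,\d x\leq C'$ using the $m=0$ analogue of the last bound in Proposition \ref{p42}, which as noted above is valid. Combining the two cases and absorbing the finite list of constants into a single $C>0$, the bound $C/n^2$ for $n\neq 0$ and the bound $C$ for $n=0$ together yield the uniform estimate $C/(1+n^2)$ for all $n\in\mathbb{Z}$, which is the claim.

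There is essentially no hard step here; the only mild subtlety is the justification of differentiation under the integral sign and the check that the $m=0$ case of the bound on $\partial_\kappa M$ holds — both of which are immediate from the building-block estimates \eqref{m:6}--\eqref{m:9} used in the proof of Proposition \ref{p42}.
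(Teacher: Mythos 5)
Your proof is correct and is exactly the argument the paper intends: the corollary is stated as an immediate consequence of the integration-by-parts identities displayed after \eqref{m:5} together with the uniform derivative bounds of Proposition \ref{p42}, with the $n=0$ case handled by the crude $L^\infty$ bound, just as you do. Your remark that the $m=0$ case of the $\partial_\kappa M$ estimate follows from the same building-block bounds \eqref{m:6}--\eqref{m:9} is a fair (and slightly more careful) reading of what the paper leaves implicit.
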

In order to complete the proof of Proposition \ref{prop:int} we use \eqref{m:5}, \cite{BDIK}(4.4), Proposition \ref{p42} and integration by parts
\begin{align*}
	\int_s^{\infty}M&\Big(tV\big(vt^{-1}\big),vt^{-1}\Big)\frac{\d t}{t}=\int_0^{\kappa}a_0(u)\frac{\d u}{u}+\sum_{n\neq 0}\int_s^{\infty}a_n(vt^{-1})\exp\Big[-2\pi\im ntV\big(vt^{-1}\big)\Big]\frac{\d t}{t}\\
	&\,=\int_0^{\kappa}a_0(u)\frac{\d u}{u}-\frac{1}{4\pi sc(\kappa)}\sum_{n\neq 0}\frac{a_n(\kappa)}{n}\e^{-2\pi\im nsV(\kappa)}-\frac{1}{4\pi}\sum_{n\neq 0}\frac{1}{n}\int_s^{\infty}\frac{\partial}{\partial t}\left[\frac{a_n(vt^{-1})}{tc(vt^{-1})}\right]\e^{-2\pi\im ntV(vt^{-1})}\d t,
\end{align*}
We estimate the infinite sums as follows: For the first sum we use, compare Remark \ref{normconst}, the fact that
\begin{equation*}
	\left|\frac{1}{c(\kappa)}\right|\leq C,\ \ \ \forall\, \kappa\in(0,1-\delta],
\end{equation*}
together with Corollary \ref{Fcoeff}. Thus 
\begin{equation}\label{m:10}
	\left|\frac{1}{sc(\kappa)}\sum_{n\neq 0}\frac{a_n(\kappa)}{n}\e^{-2\pi\im nsV(\kappa)}\right|\leq\frac{C}{s},\ \ \kappa=\frac{v}{s}\in(0,1-\delta].
\end{equation}
For the second sum we differentiate,
\begin{equation*}
	\frac{\partial}{\partial t}\left[\frac{a_n(vt^{-1})}{tc(vt^{-1})}\right]=-\frac{\kappa}{t^2c(\kappa)}\frac{\d}{\d\kappa}a_n(\kappa)-\frac{a_n(\kappa)}{(tc(\kappa))^2}\left(c(\kappa)-\kappa\frac{\d}{\d\kappa}c(\kappa)\right),\ \ \kappa=\frac{v}{t}
\end{equation*}
and since in the integral $0<\frac{v}{t}\leq\frac{v}{s}\leq 1-\delta$ we obtain using Corollary \ref{Fcoeff} and Remark \ref{normconst} the bound
\begin{equation*}
	\left|\frac{\partial}{\partial t}\left[\frac{a_n(vt^{-1})}{tc(vt^{-1})}\right]\right|\leq\frac{C}{t^2(1+n^2)}.
\end{equation*}
Thus,
\begin{equation*}
	\left|\sum_{n\neq 0}\frac{1}{n}\int_s^{\infty}\frac{\partial}{\partial t}\left[\frac{a_n(vt^{-1})}{tc(vt^{-1})}\right]\e^{-2\pi\im ntV(vt^{-1})}\d t\right|\leq\frac{C}{s},
\end{equation*}
which, together with \eqref{m:10}, completes the proof of Proposition \ref{prop:int}, i.e.
\begin{equation*}
	\int_s^{\infty}M\Big(tV\big(vt^{-1}\big),vt^{-1}\Big)\frac{\d t}{t}=\int_0^{\kappa}a_0(u)\frac{\d u}{u}+\mathcal{O}\left(s^{-1}\right)
\end{equation*}
uniformly for $s\geq s_0>0$ and $0<v\leq s(1-\delta)$.
\begin{appendix}
\section{Explicit form of $M(z,\kappa)$}\label{appA} We fix $a=a(\kappa),\tau=\tau(\kappa)$ and $V=V(\kappa)$ as in \eqref{e:0}, \eqref{e:-1} throughout. Note that in terms of $\theta_3(z|\tau)$ defined in \eqref{e:-1} the remaining three Jacobi theta functions $\theta_0(z),\theta_1(z),\theta_2(z)$ are given by (cf. \cite{NIST})
\begin{equation*}
	\theta_0(z|\tau)=\theta_3\left(z+\frac{1}{2}\bigg|\tau\right),\ \ \ \theta_2(z|\tau)=\e^{\im\frac{\pi}{4}\tau+\im\pi z}\theta_3\left(z+\frac{\tau}{2}\bigg|\tau\right),\ \ \ \theta_1(z|\tau)=-\im\e^{\im\frac{\pi}{4}\tau+\im\pi z}\theta_3\left(z+\frac{1}{2}+\frac{\tau}{2}\bigg|\tau\right).
\end{equation*}
The integral term $M(x,\kappa)$, defined in \cite{BDIK},$(1.24)$, equals
\begin{equation}\label{Mid}
	M(x,\kappa)=\frac{\Xi_0(x,\kappa)\Theta_0(x,\kappa)+6a(\kappa)\Xi_2(x,\kappa)}{48a(\kappa)(1+a(\kappa))}+\frac{\im}{4\pi}\frac{\kappa}{\theta_3(x|\tau)}\frac{\partial^2}{\partial y^2}\theta_3(y|\tau)\Big|_{y=x}\frac{\d\tau}{\d\kappa}
\end{equation}
with $\Xi_j(x,\kappa)$ and $\Theta_0(x,\kappa)$ equal to the following functions:
\begin{equation}\label{Mid2}
	\Xi_j(x,\kappa)=2\frac{\theta_3^2(0|\tau)}{\theta_3^2(x|\tau)}\frac{\theta_j(x+d|\tau)\theta_j(x-d|\tau)}{\theta_j^2(d|\tau)},\ \ j=0,2;\hspace{1cm} d=-\frac{\tau}{4},
\end{equation}
and
\begin{align}
	\Theta_0(x,\kappa)=&\,5c^2\left\{\frac{\theta_0''(x+d|\tau)}{\theta_0(x+d|\tau)}-2\left(\frac{\theta_0'(d|\tau)}{\theta_0(d|\tau)}\right)'+\frac{\theta_0''(x-d|\tau)}{\theta_0(x-d|\tau)}\right\}+14c^2\frac{\theta_0'(x-d|\tau)}{\theta_0(x-d|\tau)}\frac{\theta_0'(x+d|\tau)}{\theta_0(x+d|\tau)}\nonumber\\
	&\,-4c^2\left\{\frac{\theta_0'(x-d|\tau)}{\theta_0(x-d|\tau)}+\frac{\theta_0'(d|\tau)}{\theta_0(d|\tau)}-\frac{\theta_0'(x+d|\tau)}{\theta_0(x+d|\tau)}\right\}\frac{\theta_0'(d|\tau)}{\theta_0(d|\tau)}-2c(1+a)\bigg\{\frac{\theta_0'(x+d|\tau)}{\theta_0(x+d|\tau)}-2\frac{\theta_0'(d|\tau)}{\theta_0(d|\tau)}\nonumber\\
	&\,-\frac{\theta_0'(x-d|\tau)}{\theta_0(x-d|\tau)}\bigg\}-2(2+a);\hspace{1cm}\ \ \ \ \ \ \ \theta'(x|\tau)=\frac{\d}{\d x}\theta(x|\tau).\label{Mid3}
\end{align}
Here, compare Remark \ref{normconst},
\begin{equation*}
	c=c(\kappa)=\frac{\im}{2}\frac{1}{K(a')},\ \ a'=\sqrt{1-a^2}.
\end{equation*}
\end{appendix}


\end{document}